\newcolumntype{P}[1]{>{\centering\arraybackslash}p{#1}}
\newcommand{\abs}[1]{\left|#1\right|}
\DeclarePairedDelimiter{\ceil}{\lceil}{\rceil}
\DeclarePairedDelimiter\floor{\lfloor}{\rfloor}
\newtheorem{theorem}{Theorem}[section]
\newtheorem{lemma}{Lemma}[section]
\newtheorem*{lemma*}{Lemma}
\newtheorem{definition}{Definition}[section]
\newenvironment{protocol}
{
		\renewcommand{\ALG@name}{Protocol}
		\refstepcounter{algorithm}
		\hrule height.8pt depth0pt \kern2pt
		\renewcommand{\caption}[2][\relax]{
			{\raggedright\textbf{\fname@algorithm~\thealgorithm} ##2\par}%
			\ifx\relax##1\relax 
			\addcontentsline{loa}{algorithm}{\protect\numberline{\thealgorithm}##2}%
			\else 
			\addcontentsline{loa}{algorithm}{\protect\numberline{\thealgorithm}##1}%
			\fi
			\kern2pt\hrule\kern2pt
		}
	}{
	\kern2pt\hrule\relax
}
\let\cat@comma@active\@empty
\begin{document}
\title{Secure and practical Quantum Digital Signatures}

\author{Federico Grasselli}
\author{Gaetano Russo}
\author{Massimiliano Proietti}
\affiliation{Leonardo Innovation Labs -- Quantum Technologies, Via Tiburtina km 12400, 00131 Rome, Italy}

\begin{abstract}
    Digital signatures represent a crucial cryptographic asset that must be protected against quantum adversaries. Quantum Digital Signatures (QDS) can offer solutions that are information-theoretically (IT) secure and thus immune to quantum attacks. In this work, we analyze three existing practical QDS protocols based on preshared secure keys (e.g., established with quantum key distribution) and universal hashing families. For each protocol, we make amendments to close potential loopholes and prove their IT security while accounting for the failure of IT-secure authenticated communication. We then numerically optimize the protocol parameters to improve efficiency in terms of preshared bit consumption and signature length, allowing us to identify the most efficient protocol.
\end{abstract}

\maketitle

\section{Introduction}
Popular public-key digital signature (DS) protocols, such as the digital signature algorithm (DSA) \cite{DSA} and the elliptic curves DSA \cite{ECDSA}, can be broken by large-scale fault-tolerant quantum computers. The new post-quantum DS standards FIPS 204 \cite{FIPS204} and FIPS 205 \cite{FIPS205} are not (yet) efficiently attackable by quantum computers, but their security is still based on computational assumptions. Conversely, the appeal of quantum digital signatures (QDS) lies in their information-theoretic (IT) security, i.e., security against computationally unbounded adversaries, analogously to quantum key distribution (QKD).

DSs based on a public-key infrastructure can be verified by any user by simply retrieving the public key of the sender, which requires a trusted third party (a certificate authority) certifying that the public key actually belongs to the sender. Conversely, QDS schemes aiming at IT security do not rely on a trusted authority and may require pairwise preshared keys between all users, which can limit the number of users able to verify a signature. For this reason, QDS schemes have been mostly analyzed in the simplest non-trivial scenario, i.e. a tripartite scenario with a sender, a receiver and an additional verifier.

The first QDS was proposed in 2001 by Gottesman and Chuang \cite{QDS-GottesmanChuang}, with challenging experimental requirements such as quantum memories, secure quantum channels and swap tests. Later protocols removed the requirement of quantum memories but were still based on authenticated -- hence secure\footnote{Note that an authenticated quantum channel must be also a private channel. Indeed, any attempt to learn about the quantum state traveling in the channel leads to a modification of the state, which implies the channel is no longer authentic. Hence, authenticity implies privacy for quantum channels.}-- quantum channels \cite{QDS-noqmemory}. In 2016, more practical QDS schemes were proposed without the need of secure quantum channels \cite{QDS-2016,Amiri2016}. However, the scheme in Ref.~\cite{QDS-2016} is not proved secure under the more general type of attacks. Anyways, the main drawback of previous QDS schemes, including Refs.~\cite{QDS-2016,Amiri2016}, is that they can only sign single-bit messages, thus resulting inefficient for long messages.

To overcome the security and efficiency limitations of previous QDS proposals, three promising QDS schemes have been proposed, respectively by Yin \textit{et al.} in Ref.~\cite{QDS-Chen}, by García Cid \textit{et al.} in Ref.~\cite{QDS-INDRA} and by Amiri \textit{et al.} in Ref.~\cite{USS-andersson}. These schemes can sign arbitrarily long documents with relatively short signatures, thereby improving on the efficiency of previous proposals by several orders of magnitude. The quantum nature of these schemes lies in leveraging preshared QKD keys (and universal hashing families) to achieve IT security --except for Ref.~\cite{QDS-INDRA}, due to the use of computationally secure hash functions.

Recently, the protocol by Yin \textit{et al.} was further developed to remove the need for privacy amplification \cite{QDS-Chen2} and also error correction \cite{QDS-Chen3} in the QKD protocols preceding the signature scheme. In parallel, the protocol by García Cid \textit{et al.} was improved in Ref.~\cite{QDS-CNR} in terms of security bounds and signature length, though still lacking the IT security of the other two protocols. The protocol by Amiri \textit{et al.} was modified in Ref.~\cite{QDS-Kiktenko}, where the authors improve the scalability of the protocol for large networks of receivers by introducing a smaller and relatively trusted subnetwork that is tasked with verifying the signatures on the proxy of all other nodes. In the process, the authors also close a security loophole of the original protocol \cite{USS-andersson} and prevent dishonest coalitions of receivers from enforcing non-transferability by limiting the number of dishonest users compared to Ref.~\cite{USS-andersson}.

On the experiments side, Refs.~\cite{Clarke2012} and \cite{QDS-CVexp} report the first experimental realization of a QDS protocol with discrete and continuous variable encoding, respectively. More recent experiments \cite{QDS-Chen,QDS-exp8network,QDS-ScienceExp} implemented efficient QDS schemes \cite{USS-andersson,QDS-Chen2} in realistic quantum networks, demonstrating multiple contract signatures within one second \cite{QDS-ScienceExp}.

In this work, we perform an in-depth analysis of the three seminal protocols from Refs.~\cite{QDS-Chen,QDS-INDRA,USS-andersson}, by paying particular attention to the use of authenticated channels. This entails, for each protocol, reviewing which communication steps require authentication in order to avoid security loopholes, followed by proving their IT security while taking into account the failure of IT-secure authenticated channels, unlike previous works \cite{QDS-Chen,QDS-INDRA,USS-andersson,QDS-Kiktenko,QDS-CNR,QDS-Chen3,QDS-Chen2}. In the process, we also tighten the security bounds compared to those derived in Refs.~\cite{QDS-Chen,QDS-INDRA,USS-andersson}, where possible. In the case of the protocol by García Cid \textit{et al.} \cite{QDS-INDRA}, we propose a modified protocol which achieves IT security by replacing computationally secure hashes with $\varepsilon$-almost XOR universal$_2$ hash families \cite{LFSR-hashing}. For the protocol by Amiri \textit{et al.} \cite{USS-andersson}, we address the security loophole raised in Ref.~\cite{QDS-Kiktenko} with a complementary approach to that of Ref.~\cite{QDS-Kiktenko}.

After having established the IT security of the three protocols from Refs.~\cite{QDS-Chen,QDS-INDRA,USS-andersson} on an equal footing, we compare their performance in a tripartite scenario of one sender and two receivers. In particular, we numerically optimize their parameters when signing documents of various sizes at a given security threshold and deduce which protocol is more efficient in terms of consumed preshared bits and length of the signature. In doing so, we account for the preshared bits needed to agree on universal hashing functions, to establish secret communication with one-time pad encryption, and to establish the IT-secure authenticated channels.

The paper is organized as follows. In Sec.~\ref{sec:notation} we describe the adversarial scenario and summarize the universal families adopted by the QDS protocols. In Sec.~\ref{sec:all-QDS} we detail the analyzed QDS protocols and prove their security. In Sec.~\ref{sec:efficiency-comp} we benchmark the performance of the protocols. We provide concluding remarks in  Sec.~\ref{sec:conclusion}. Appendix~\ref{app:WC} contains a proof of IT-secure message authentication while Appendix~\ref{app:efficient-hashing} details the universal families used in the manuscript. Appendix~\ref{app:security-Andersson} contains the security proofs of the QDS protocol based on Ref.~\cite{USS-andersson}.

\section{Background} \label{sec:notation}

In order to analyze the three QDS protocols from Refs.~\cite{QDS-Chen,QDS-INDRA,USS-andersson} in the tripartite scenario, we adopt a uniform notation across the protocols to improve readability.

The document to be signed is indicated as $Doc$ and the corresponding signature is $Sig$. Together, they form the document-signature pair $\{Doc,Sig\}$. The bit length of $Doc$ is $b_M$.

In our tripartite scenario, Alice is the sender and produces the pair $\{Doc,Sig\}$, Bob is the receiver of the pair $\{Doc,Sig\}$ and Charlie is the honest the verifier. Alice and Bob can be malicious, but not both at the same time. Specifically, a malicious Alice can perform a repudiation attack, in which she wants to convince Charlie that the pair $\{Doc,Sig\}$ verified by Bob is not authentic. Alternatively, a malicious Bob can perform a forgery attack where he produces a forged pair $\{Doc',Sig'\}$ and wants Charlie to accept the forged pair. We prove the security of the analyzed QDS protocols against forgery and repudiation attacks, according to the following definitions.

\begin{definition} \label{def:forgery}
    A QDS protocol is $\varepsilon_{\rm for}$-secure against forgery attacks if the probability that the protocol does not abort and that Charlie accepts a pair $\{Doc',Sig'\}$ forged by Bob is at most $\varepsilon_{\rm for}$.
\end{definition}

\begin{definition} \label{def:repudiation}
    A QDS protocol is $\varepsilon_{\rm rep}$-secure against repudiation attacks if the probability that the protocol does not abort and that Charlie rejects a pair $\{Doc,Sig\}$ accepted by Bob is at most $\varepsilon_{\rm rep}$.
\end{definition}

The analyzed QDS protocols \cite{QDS-Chen,QDS-INDRA,USS-andersson} achieve IT security with respect to forgery and repudiation attacks by extending IT-secure message authentication codes --specifically, Wegman-Carter (WC) authentication based on universal hashing \cite{WC}-- to scenarios with multiple receivers, while making sure that no receiver can pretend to be the sender. In the considered QDS protocols, Alice generates the signature by hashing the $b_M$-bit document with hash functions derived from either an $\varepsilon$-almost XOR universal$_2$ family or an $\varepsilon$-almost strongly universal$_2$ family, generating hashes of $b_H$ bits each. We define both families for completeness below.

\begin{definition} \label{def:almost-strongly-universal2}
Let $\mathcal{F}_{\rm ASU}=\{f:M \to B\}$ be a family of hash functions. Then, $\mathcal{F}_{\rm ASU}$ is $\varepsilon$-almost strongly universal$_2$ ($\varepsilon$-ASU$_2$) if:
\begin{align}
    &\forall\, m_1 \neq m_2, \, \forall\, b_1,b_2 \in B \nonumber\\
    &\Pr_{f \in_R \mathcal{F}_{\rm ASU}} [f(m_1) = b_1 \wedge f(m_2)=b_2] \leq \frac{\varepsilon}{2^{b_H}} ,  \label{eq-almost-strongly-universal2}
\end{align}
where $f\in_R \mathcal{F}$ indicates that the function $f$ is sampled randomly from the set $\mathcal{F}$ and where $b_H$ is the bit length of the elements in $B$.
\end{definition}

\begin{definition} \label{def:almost-XOR-universal2}
Let $\mathcal{F}_{\rm AXU}=\{f:M \to B\}$ be a family of hash functions. Then, $\mathcal{F}_{\rm AXU}$ is $\varepsilon$-almost XOR universal$_2$ ($\varepsilon$-AXU$_2$) if:
\begin{align}
        \forall\, m_1 \neq m_2,\,\forall\, b \quad \Pr_{f \in_R \mathcal{F}_{\rm AXU}} [f(m_1) \oplus f(m_2) = b] \leq \varepsilon. \label{eq-almost-XOR-universal2}
\end{align}
\end{definition}
Note that $\varepsilon$-ASU$_2$ families are a subset of $\varepsilon$-AXU$_2$ families since \eqref{eq-almost-strongly-universal2} implies \eqref{eq-almost-XOR-universal2}.

In Table~\ref{tab:efficient-hashing} we report the two specific families adopted by the analyzed QDS protocols, summarizing their main characteristics.

\begin{table}[h!t]
\renewcommand\arraystretch{1.3}
\setlength{\tabcolsep}{5pt}
\centering
\caption{The two universal hashing families considered in this manuscript, mapping messages of $b_M$ bits to tags of $b_H$ bits. We report the number of preshared key bits consumed to agree on a specific function of the family and the security parameter. More details are found in Appendix~\ref{app:efficient-hashing}.}
\label{tab:efficient-hashing}
\begin{tabular}[t]{>{\centering}p{0.1\linewidth}>{\centering}p{0.50\linewidth}p{0.20\linewidth}}
\toprule
\textbf{Family} & \textbf{Preshared bits} & $\varepsilon$\textbf{-security} \\
\midrule
$\mathcal{F}_{\rm ASU}$ & $3 b_H + 2  \log_2 \left(\frac{b_M}{b_H} -1\right)$ & $2^{1-b_H}$\\
$\mathcal{F}_{\rm AXU}$ & $2b_H$ & $b_M 2^{1-b_H}$\\
\bottomrule
\end{tabular}
\end{table}

Both universal families can be used in a WC authentication scheme to attain IT-secure message authentication. In brief, suppose that a sender and a receiver share a random secret string that uniquely identifies a hash function $f$ from $\mathcal{F}_{\rm ASU}$. Then, the sender wishing to send a message $m$ sends the tuple $(m,t)$ to the receiver over a public channel, where the tag is $t=f(m)$. Let $(m',t')$ be the tuple received the by the receiver. The receiver accepts the message as authentic if $f(m')=t'$. By the property \eqref{eq-almost-strongly-universal2} of $\varepsilon$-ASU$_2$ families, we deduce that an attacker trying to forge a pair $(m',t')$ with $m \neq m' $ that is accepted by Bob only succeeds with probability $\varepsilon$ \cite{Kiktenko_2020}.

An analogous result can be proved for $f \in \mathcal{F}_{\rm AXU}$. Indeed, if the sender and receiver additionally share a random string $r$ and the tag is computed as $t=f(m) \oplus r$ (the symbol $\oplus$ indicates addition modulo $2$), the property \eqref{eq-almost-XOR-universal2} of $\varepsilon$-AXU$_2$ families ensures that an attacker forging a new pair only succeeds with probability $\varepsilon$ \cite{LFSR-hashing}.

Interestingly, as first proposed by WC \cite{WC}, it is possible to authenticate $n$ messages using the same hash function from either family, $\mathcal{F}_{\rm AXU}$ or $\mathcal{F}_{\rm ASU}$, by OTP-encrypting the tags with fresh random strings, such that the attack probability on any of the $n$ messages is still $\varepsilon$ \cite{Kiktenko_2020}. This procedure is sometimes called key recycling and we provide a proof of its security when employing the $\varepsilon$-ASU$_2$ family in Appendix~\ref{app:WC}.

In each QDS protocol, it is implicitly assumed that if a message sent over the authenticated channel is not successfully authenticated, then the protocol aborts (which is not equivalent to rejecting the signature). In order to compare the QDS protocols on an equal footing while accounting for the failure of IT-secure authenticated channels, we assume that all authenticated channels implement WC authentication with key recycling. Specifically, we employ functions from the $\mathcal{F}_{\rm AXU}$ family (see Table~\ref{tab:efficient-hashing}), producing hashes of $b'_H$ bits that are protected by OTP with fresh secret bits. Therefore, the number of preshared bits consumed by sending $n$ messages over an authenticated channel is $(2+n)b'_H$, where we also accounted for the bits required to agree on a specific element from $\mathcal{F}_{\rm AXU}$.

\section{QDS protocols} \label{sec:all-QDS}

In this section we analyze three practical QDS protocols based on the protocols introduced in Refs.~\cite{QDS-Chen,QDS-INDRA,USS-andersson}. For each protocol, we provide a detailed description highlighting the differences with respect to its original formulation and a security proof against forgery and repudiation attacks as defined in Sec.~\ref{sec:notation}. Our security proofs account for the failure probability of WC authentication, unlike their original derivations, while all preshared secret bits are assumed to be perfectly secure.

\subsection{QDS by Yin \textit{et al.}} \label{sec:QDS-Chen}

The QDS scheme introduced in Ref.~\cite{QDS-Chen} resembles a tripartite classical DS scheme, where the security is lifted to IT security by replacing public-key cryptography with the use of one-time pad (OTP) and one-time universal$_2$ hashing (OTUH), i.e., hash functions from the $\mathcal{F}_{\rm AXU}$ family that are renewed for each document signature.

\subsubsection{The protocol}
We modify the original formulation of the protocol from \cite{QDS-Chen} such that Charlie always verifies the signature after receiving the outcome of the verification of Bob, regardless of Bob accepting it or not. In this way, Charlie can detect if Bob is a liar who accepts falsified signatures or rejects original signatures. We believe this is an additional useful feature of the protocol.\\

\begin{protocol}  \label{QDSprot-Chen} \caption{QDS protocol \cite{QDS-Chen}}
\begin{enumerate}[wide, labelwidth=!, labelindent=0pt]

\item \textit{Distribution stage}\quad Alice, Bob and Charlie run a QKD or quantum secret sharing protocol, in order to establish secret keys $X_A$ for Alice, $X_B$ for Bob, and $X_C$ for Charlie, with the property that $X_A=X_B\oplus X_C$. We partition each of the keys into two partitions. The first $b_H$ bits of each key is labeled as $X^{b_H}_A$, $X^{b_H}_B$, and $X^{b_H}_C$, respectively, while the following $2{b_H}$ bits are labeled as $X^{2b_H}_A$, $X^{2b_H}_B$, and $X^{2b_H}_C$. It holds: $X_A = X^{b_H}_A \cup X^{2b_H}_A$ and similarly for $X_B$ and $X_C$.

\item  \textit{Signing of Alice}\quad Alice generates a $b_H$-bit random string, $p_a$, with elements $(p_a)_i$. After associating to $p_a$ the following polynomial over GF(2): $p_a(x)=x^{b_H} + (p_a)_{b_H-1} x^{b_H-1} + \dots + (p_a)_1 x + (p_a)_0$ of degree $b_H$, Alice checks whether $p_a(x)$ is irreducible (see algorithm in Supplementary Material of \cite{QDS-Chen}). If the test is negative, Alice generates a new random string until the corresponding polynomial is irreducible. From the irreducible polynomial $p_a(x)$ and the key $X_A^{b_H}$, Alice defines a linear feedback shift register (LFSR) and obtains the associated Toeplitz matrix $T_{p_a,X_A^{b_H}}$, which is an element of $\mathcal{F}_{\rm AXU}$ (see Appendix~\ref{app:efficient-hashing}). Then, she computes the $b_H$-bit hash value of the document: $h_a = T_{p_a,X_A^{b_H}} \cdot Doc$ and the digest: $Dig = (h_a||p_a)$ as the concatenation of the hash of the document with the random string $p_a$. Finally, Alice derives the signature by encrypting the digest with the secret key $X^{2b_H}_A$ via OTP: $Sig = Dig \oplus X^{2b_H}_A$. The couple $\{Doc,Sig\}$ is sent to the receiver, Bob, over a public channel.

\item  \textit{Verification of Bob}\quad Firstly, Bob sends via an authenticated channel the received couple $\{Doc,Sig\}$ and the key $X_B$ to Charlie. Once Charlie receives the data from Bob, he sends $X_C$ to Bob over the same authenticated channel. Bob uses the key from Charlie to recover Alice's key, by computing: $K_B = X_B \oplus X_C$. Using $K^{2b_H}_B$ Bob recovers the digest by computing $K^{2b_H}_B \oplus Sig$ and retrieves, in particular, the bit string $p_b$ and the hash $h_b$ produced by Alice. Bob generates the LFSR-based Toepliz matrix using $p_b$ and the string $K^{b_H}_B$, $T_{p_b,K^{b_H}_B}$, and computes the hash $h_b'=T_{p_b,K^{b_H}_B} \cdot Doc$. Bob accepts the signature if $h_b=h_b'$ and informs Charlie by sending him the bit $V_B=0$ over the authenticated channel. Otherwise, if $h_b\neq h_b'$, Bob rejects the signature and sends $V_B=1$ to Charlie.

\item  \textit{Verification of Charlie}\quad Charlie uses his key and the key received from Bob to compute $K_C = X_C \oplus X_B$. Then, he employs $K^{2b_H}_C$ and the $Sig$ from Bob to acquire an expected digest: $K^{2b_H}_C \oplus Sig$, i.e. the bit string $p_c$ and the hash $h_c$ produced by Alice. Subsequently, Charlie generates the LFSR-based Toeplitz matrix $T_{p_c,K^{b_H}_C}$ and computes the hash $h_c'=T_{p_c,K^{b_H}_C} \cdot Doc$. Charlie accepts the signature if $h_c=h_c'$, otherwise he rejects it.

\end{enumerate}

\end{protocol}
\vspace{1ex}

\subsubsection{Security proof}

In the following, we rederive the security of Protocol~\ref{QDSprot-Chen} with respect to forgery and repudiation. In doing so, we account for the potential failure of the authenticated channel between Bob and Charlie, which was not considered in the original paper \cite{QDS-Chen}. Moreover, we show that failing to renew the hash function from $\mathcal{F}_{\rm AXU}$ for each document signature opens a security loophole.

\begin{lemma} \label{lm:forgery-Chen}
    Protocol~\ref{QDSprot-Chen} is $\varepsilon_{\rm for}$-secure against forgery according to Definition~\ref{def:forgery}, with $\varepsilon_{\rm for}=b_M/2^{b_H-1}$.
\end{lemma}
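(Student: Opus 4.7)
The plan is to reduce Bob's forgery attempt to a collision event for the LFSR-Toeplitz family and then invoke the $\varepsilon = b_M/2^{b_H-1}$ AXU bound recorded in Table~\ref{tab:efficient-hashing}. First I would pin down Bob's view at the instant he commits his forged triple $(Doc',Sig',X_B')$: he has seen $(Doc,Sig)$ from Alice and knows his own preshared $X_B$, while $X_C$ is uniform and independent of this view. Hence from Bob's perspective $X_A = X_B\oplus X_C$ is uniform, and a direct conditioning on the observed $Sig = (h_a\|p_a)\oplus X_A^{2b_H}$ leaves $(p_a, X_A^{b_H})$ jointly uniform on its support (the seed half of $X_A$ is untouched by $Sig$, while the polynomial half is a shift of a uniform random $X_A^{2b_H}$). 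I would also observe here that the failure probability of the Bob--Charlie authenticated channel does not enter $\varepsilon_{\rm for}$: on that channel Bob is the sender and Charlie the honest receiver, so there is no third party whose ability to forge authentication tags could be exploited to make Charlie accept.

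Next I would parametrize the attack by the perturbations $\Delta_B := X_B\oplus X_B'$ and $\Delta_S := Sig\oplus Sig'$, together with the choice $Doc'\neq Doc$. Pushing these through Charlie's verification rule, i.e. computing $(h_c\|p_c) = Sig'\oplus(X_C\oplus X_B')^{2b_H}$ and checking $h_c = T_{p_c,(X_C\oplus X_B')^{b_H}}\cdot Doc'$, and using $Sig = (h_a\|p_a)\oplus X_A^{2b_H}$ and $h_a = T_{p_a,X_A^{b_H}}\cdot Doc$, the acceptance event rewrites as
\[
T_{p_a,\,X_A^{b_H}}\cdot Doc \;\oplus\; T_{p_a\oplus\beta,\,X_A^{b_H}\oplus\alpha}\cdot Doc' \;=\; \gamma ,
\]
where $\alpha$, $\beta$ and $\gamma$ are the halves of $\Delta_B$ and of $\Delta_S\oplus\Delta_B^{2b_H}$ and are therefore chosen freely by Bob as functions of his view.

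The canonical attack has $X_B' = X_B$ and leaves the polynomial-carrying bits of $Sig$ untouched, i.e. $\alpha = \beta = 0$. In that case the equation collapses to $T_{p_a,X_A^{b_H}}\cdot(Doc\oplus Doc') = \gamma$ with $Doc\oplus Doc'\neq 0$, which is exactly the AXU-collision event of Definition~\ref{def:almost-XOR-universal2} for the family $\mathcal{F}_{\rm AXU}$, bounded by $\varepsilon = b_M/2^{b_H-1}$. For general $(\alpha,\beta)\neq(0,0)$ I would exploit the linearity of LFSR-Toeplitz matrices in the seed to split the second term into a seed-random contribution plus a Bob-known correction. The acceptance condition then either reduces to a shifted AXU-collision still bounded by $\varepsilon$, or to a linear equation whose left-hand side is uniform in the residual randomness and is thus satisfied with probability at most $2^{-b_H}\le\varepsilon$. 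Taking the supremum over Bob's strategies gives $\varepsilon_{\rm for}\le b_M/2^{b_H-1}$.

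The step I expect to be the main obstacle is precisely this last one: handling the hash-parameter perturbations $(\alpha,\beta)\neq(0,0)$ without losing the tight prefactor in $\varepsilon$ requires careful bookkeeping with the LFSR-Toeplitz construction and a separate treatment of the irreducibility constraint on $p_a$, so that the polynomial marginal seen by Bob is genuinely uniform on the correct support. Everything else is essentially substitution and a direct appeal to the AXU property of $\mathcal{F}_{\rm AXU}$; the fact that the authenticated-channel failure drops out entirely is what allows the clean form $\varepsilon_{\rm for} = b_M/2^{b_H-1}$ to survive in the present setting.
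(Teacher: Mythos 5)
Your core argument is the same as the paper's: the dominant attack is reduced to the XOR-collision event $T_{p_a,X_A^{b_H}}\cdot(Doc\oplus Doc')=\gamma$ for the LFSR--Toeplitz family, bounded by the $\varepsilon=b_M2^{1-b_H}$ entry of Table~\ref{tab:efficient-hashing}; your observation that the Bob--Charlie authentication failure does not enter $\varepsilon_{\rm for}$ also matches the paper, which keeps $b'_H$ out of the forgery bound. Where you differ is in scope. The paper's proof only considers Bob replacing $\{Doc,Sig\}$ by $\{Doc\oplus m,\,Sig\oplus(t\|0)\}$ while forwarding the true $X_B$ and leaving the $p_a$-carrying bits of $Sig$ untouched; you parametrize the full strategy space $(\Delta_B,\Delta_S,Doc')$, which is a genuine strengthening since Bob does control both $X_B'$ and all $2b_H$ bits of $Sig'$. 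The paper also separately treats the case where Alice signs nothing and Bob fabricates the pair from scratch (success probability $2^{-2b_H}$, dominated by the collision term); your framework assumes Bob has seen a valid pair, so strictly you should add a sentence covering that branch of Definition~\ref{def:forgery}.

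The one place where your argument has a real gap --- which you flag yourself --- is the general-perturbation case $(\alpha,\beta)\neq(0,0)$. Linearity in the seed gives $T_{p_a,X_A^{b_H}}\cdot Doc\oplus T_{p_a\oplus\beta,X_A^{b_H}\oplus\alpha}\cdot Doc'=\bigl(M_{p_a,Doc}\oplus M_{p_a\oplus\beta,Doc'}\bigr)X_A^{b_H}\oplus T_{p_a\oplus\beta,\alpha}\cdot Doc'$, and neither summand is a ``Bob-known correction'': the shift depends on the unknown $p_a$, the combined linear map in $X_A^{b_H}$ can be singular for adversarially chosen $Doc'$, and for $\beta\neq0$ the polynomial $p_a\oplus\beta$ used by Charlie need not be irreducible, so the Table~\ref{tab:efficient-hashing} bound (proved for the family as constructed, with $p$ irreducible and uniform on that support) does not apply off the shelf. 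Closing this requires redoing the Krawczyk-style root-counting argument for mismatched $(p,s)$ pairs rather than ``a direct appeal to the AXU property.'' Since the paper's own proof silently restricts to $\alpha=\beta=0$, your proposal is not weaker than the published argument --- it just promises more than it delivers on the extra cases.
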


\begin{proof}
    There are two cases of forgery attacks: 1) Alice does not sign any document and Bob forges a new pair $\{Doc',Sig'\}$; 2) Alice sends $\{Doc,Sig\}$ to Bob and Bob forges a modified pair $\{Doc',Sig'\}$.

    In the first case, Bob has no information from Alice and in particular no information on Alice's key $X_A^{2b_H}$. Suppose the worst-case scenario where Bob forwards to Charlie a document $Doc'$ and a digest $Dig'=(h_a',p_a')$ that is consistent with the document: $h_a'=T_{p_a',X^{b_H}_A} \cdot Doc'$. Charlie will recover the digest $Dig'$ and hence validate the signature if and only if Bob correctly guesses the encryption key $X^{2b_H}_A$ used to encrypt the digest. Therefore, the attack will succeed with probability $2^{-2b_H}$.

    In the second case, Bob's goal is to find a message $m$ and a hash $t$ such that: $t=T_{p_a,X_A^{b_H}} \cdot m$. Indeed, if he succeeds, he sends $\{Doc',Sig'\}$ with $Doc'=Doc \oplus m$ and $Sig'=Sig \oplus (t || 0)$. This pair will be validated by Charlie, since the expected hash by Charlie is: $h_c=T_{p_a,X_A^{b_H}} \cdot (Doc \oplus m) = h_a \oplus t$, which coincides with the hash sent by Bob in $Sig'$. Therefore, the goal of Bob coincides with the goal of an adversary in WC authentication implemented with the family $\mathcal{F}_{\rm AXU}$, where the tags are protected by OTP. From Table~\ref{tab:efficient-hashing}, we have that Bob's attack succeeds at most with probability $b_M/2^{b_H-1}$. 

    In conclusion, by putting together the two cases, Bob's forgery attacks succeed with probability at most $\varepsilon_{\rm for}=b_M/2^{b_H-1}$, which concludes the proof.    
\end{proof}

\begin{lemma} \label{lm:repudiation-Chen}
    Protocol~\ref{QDSprot-Chen}, with $b_M>b_H$, is $\varepsilon_{\rm rep}$-secure against repudiation according to Definition~\ref{def:repudiation}, with $\varepsilon_{\rm rep}=(2b_H+b_M)/2^{b'_H-1}$.
\end{lemma}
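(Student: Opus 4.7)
The plan is to reduce Alice's repudiation attack to a single Wegman--Carter forgery event within the authenticated session between Bob and Charlie, and to bound its probability by the AXU-$b'_H$ security of the longest message authenticated in that session.

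First I would establish a consistency claim: absent any WC forgery in the Bob--Charlie exchange, repudiation is impossible. Indeed, if all WC-authenticated messages are delivered faithfully, then by the QKD relation $X_A = X_B \oplus X_C$ one has $K_B = X_B \oplus X_C = X_A = X_C \oplus X_B = K_C$. Bob and Charlie thus decrypt $Sig$ with the same $X_A^{2b_H}$, recover the same digest $(h_a,p_a)$, and compare $h_a$ against the same $T_{p_a,X_A^{b_H}}\cdot Doc$. Their outcomes therefore coincide; in particular, Bob accepts iff Charlie accepts, so no repudiation occurs absent a forgery.

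Next I would invoke the WC-with-key-recycling bound for $\mathcal{F}_{\rm AXU}$ (as sketched in Sec.~\ref{sec:notation}; the analogue for $\mathcal{F}_{\rm ASU}$ is proved in Appendix~\ref{app:WC}): with $b'_H$-bit tags, the probability that the adversary forges any message in the session is at most $L\cdot 2^{1-b'_H}$, where $L$ is the length of the longest authenticated message. The Bob--Charlie session contains $\{Doc,Sig\}$ of length $b_M+2b_H$, $X_B$ and $X_C$ of length $3b_H$ each, and the one-bit flag $V_B$. The hypothesis $b_M>b_H$ yields $b_M+2b_H>3b_H$, so $L=b_M+2b_H$, and the overall repudiation probability is bounded by $(b_M+2b_H)\cdot 2^{1-b'_H}=(2b_H+b_M)/2^{b'_H-1}=\varepsilon_{\rm rep}$.

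The main obstacle is the precise form of the key-recycling theorem for AXU when several messages of differing lengths share one hash function. Showing that the session's total forgery probability is governed by the \emph{longest} message, rather than by a sum over messages (which would produce the weaker $(b_M+5b_H)\cdot 2^{1-b'_H}$), is the step requiring most care; I would either adapt the ASU argument of Appendix~\ref{app:WC} to AXU or invoke a suitable AXU variant of the WC bound from Ref.~\cite{Kiktenko_2020}. Once this is in place, every useful repudiation strategy by Alice is captured by the single forgery event bounded above: tampering with $X_B$ shifts Charlie's key and can force rejection, but is already counted; tampering with $X_C$ only causes Bob to reject, contradicting repudiation; and tampering with $V_B$ does not influence Charlie's verification, which depends solely on his own hash check.
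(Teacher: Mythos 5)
Your overall structure matches the paper's: first show that with ideal authentication Bob and Charlie necessarily agree (since $K_B=K_C=X_A$ and they verify the same pair), so repudiation requires Alice to defeat the Bob--Charlie authenticated channel; then bound that event. The final bound and the role of the hypothesis $b_M>b_H$ are also identified correctly.

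However, the step you flag as "requiring most care" --- why the session is governed by the \emph{longest} authenticated message rather than a sum over messages --- is a genuine gap in your write-up, and the way you propose to close it (a session-wide key-recycling theorem for $\mathcal{F}_{\rm AXU}$ stating that the total forgery probability over all messages is $L\cdot 2^{1-b'_H}$ with $L$ the longest length) is not what is needed and is not how the paper argues. The key-recycling guarantee is per message: each authenticated message of length $L_i$ individually resists forgery except with probability $L_i\, 2^{1-b'_H}$. The reason one takes a maximum rather than a sum is the abort semantics combined with Definition~\ref{def:repudiation}: that definition only counts runs in which the protocol does \emph{not} abort, and any failed tampering attempt on an authenticated message causes an abort. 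Hence Alice's optimal strategy is a single tampering attempt on one chosen message (tampering with several only lowers the probability that none of the attempts is detected), and the non-abort-and-repudiate probability is bounded by the largest per-message forgery probability, $\max\{(2b_H+b_M)/2^{b'_H-1},\,3b_H/2^{b'_H-1}\}=(2b_H+b_M)/2^{b'_H-1}$ under $b_M>b_H$. Your closing observations (tampering with $X_C$ only makes Bob reject, which is not a repudiation under Definition~\ref{def:repudiation}; tampering with $V_B$ is irrelevant to Charlie's check) are correct and are in fact slightly sharper than the paper, which conservatively includes the $3b_H/2^{b'_H-1}$ terms in the maximum anyway. With the abort argument substituted for the unproven session-wide lemma, your proof is complete and equivalent to the paper's.
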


\begin{proof}
    Since Bob and Charlie communicate via an authenticated channel and behave honestly, Charlie obtains the same pair $\{Doc,Sig\}$ received by Bob and the key $X_B$. Bob in turn obtains $X_C$. Thus, both Bob and Charlie recover Alice's key: $K_B=K_C=X_A$. Using $X_A$ and the pair $\{Doc,Sig\}$, both Bob and Charlie will reach the same conclusion about Alice's signature. Therefore, it is impossible that they disagree, unless Alice successfully tampers with the authenticated channel.
    
    Given the implementation of the authenticated channel described in Sec.~\ref{sec:notation}, the probability that Alice successfully modifies Bob's message $\{Doc,Sig\}$ and $X_B$ is at most $(2b_H + b_M)/2^{b'_H-1}$  and $3b_H/2^{b'_H-1}$, respectively. Similarly, the probability of a successful attack on the message $X_C$ from Charlie is $3b_H/2^{b'_H-1}$. Since the protocol aborts when Alice fails to tamper with the messages exchanged by Bob and Charlie over the authenticated channel, the maximum probability that the protocol does not abort and Alice successfully changes an authenticated message --thereby causing a repudiation-- is given by: $\varepsilon_{\rm rep}=\max\{(2b_H+b_M)/2^{b'_H-1},3b_H/2^{b'_H-1}\}$, which concludes the proof.
\end{proof}

Although forgery attacks in Protocol~1 are related to attacks in WC authentication, there is an important difference. While in WC authentication a given hash function can be reused for subsequent messages (key recycling) when the tags are protected by OTP, in Protocol~1 this is not possible since the hash function becomes known to Bob after the protocol execution and Bob is a potential adversary. This implies that Alice must employ a new function from  $\mathcal{F}_{\rm AXU}$ for every new signature, i.e., both the initial vector $X^{b_H}_A$ and the irreducible polynomial $p_a$ must be renewed.

In the following we prove that renewing $X^{b_H}_A$ without renewing $p_a$ opens a security loophole. Indeed suppose Alice chooses the same value for $p_a$ to sign multiple documents. Then, Bob will know $p_a$ when he receives the second document signed by Alice. This enables Bob to forge a $\{Doc',Sig'\}$ pair that will be validated by Charlie, thereby making the protocol vulnerable to forgery attacks.

\begin{lemma} \label{lm:loophole-Chen}
    If $p_a$ is known to Bob, then he can always perform a successful forgery attack, i.e., Charlie validates the pair $\{Doc',Sig'\}$ forged by Bob with unit probability.
\end{lemma}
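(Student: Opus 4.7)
My plan is to exploit the algebraic structure of the LFSR-based Toeplitz hash, specifically the fact that $T_{p_a,s}\cdot m$ can be written as the polynomial product $s(x)\cdot m(x) \bmod p_a(x)$ in $GF(2)[x]$, where $s$ is the seed, $m$ is the message, and $p_a$ is the irreducible polynomial of degree $b_H$ (see Appendix~\ref{app:efficient-hashing}). This is the key identity I need to invoke; once it is in place, knowing $p_a$ lets Bob construct many messages sharing the same hash without ever learning the seed $X_A^{b_H}$ or the one-time pad $X_A^{2b_H}$.

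Concretely, given the legitimately received pair $\{Doc,Sig\}$ and the knowledge of $p_a$ (inherited from a previous signature), I would instruct Bob to pick any non-zero polynomial $q(x)$ of degree at most $b_M-1-b_H$, set
\begin{equation}
    Doc' = Doc \oplus \bigl(q(x)\cdot p_a(x)\bigr),
\end{equation}
and forward $\{Doc',Sig\}$ to Charlie, together with his honest key $X_B$ (so that Charlie reconstructs $K_C=X_A$ correctly). Note that $Doc'\ne Doc$ because $q\ne 0$ and $p_a\ne 0$, and that $Doc'$ still fits into $b_M$ bits provided $b_M>b_H$, which is consistent with the hypothesis also used in Lemma~\ref{lm:repudiation-Chen}.

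Next I would verify that Charlie accepts. Charlie decrypts $Sig$ with $K_C^{2b_H}=X_A^{2b_H}$ and thus recovers the original digest $(h_a,p_a)$. He then computes $h_c=T_{p_a,X_A^{b_H}}\cdot Doc'$, which by the polynomial interpretation equals
\begin{equation}
    X_A^{b_H}(x)\cdot Doc'(x) \bmod p_a(x) = X_A^{b_H}(x)\cdot Doc(x) \bmod p_a(x) = h_a,
\end{equation}
since the added term $X_A^{b_H}(x)\cdot q(x)\cdot p_a(x)$ vanishes modulo $p_a(x)$. Hence $h_c=h_a$ with certainty and the forged pair is validated. Putting these steps together establishes the lemma.

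The only delicate point I anticipate is justifying the polynomial-multiplication interpretation of $T_{p_a,s}$; this is a standard property of the Krawczyk/LFSR construction and I would simply cite Appendix~\ref{app:efficient-hashing} rather than rederive it. Everything else is a straightforward algebraic check, and the argument makes transparent why renewing $X_A^{b_H}$ alone is not enough: a stale $p_a$ exposes an entire coset of collisions that Bob can freely exploit.
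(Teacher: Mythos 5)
Your proposal is correct and is essentially the paper's own attack: you perturb the document by a multiple of $p_a(x)$ so that the hash of the perturbation vanishes and the original signature remains valid (the paper phrases this as choosing $t=0$ and $m(x)=p_a(x)g(x)$, justifying $T_{p_a,s}\cdot m=0$ via the diagonalization $BDB^{-1}$ with eigenvalues $m(\lambda_i)$ at the roots of $p_a$, whereas you invoke the equivalent ``multiply by $m(x)$ mod $p_a(x)$'' characterization of the LFSR--Toeplitz hash). Your degree bookkeeping ($\deg q \leq b_M-1-b_H$) is in fact slightly more careful than the paper's, which allows $g$ of degree $b_M-b_H$ and thus nominally overshoots the $b_M$-bit message length by one coefficient.
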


\begin{proof}
To see this, we observe that Bob's goal is to find a $b_M$-bit message $m$ and a hash $t$ such that: $t=T_{p_a,X_A^{b_H}} \cdot m$. If he succeeds, he can send $\{Doc',Sig'\}$ with $Doc'=Doc \oplus m$ and $Sig'=Sig \oplus (t || 0)$, which will be validated by Charlie, since the expected hash by Charlie is: $h_c=T_{p_a,X_A^{b_H}} \cdot (Doc \oplus m) = h_a \oplus t$, which coincides with the hash sent by Bob in $Sig'$. Now, from the knowledge of $p_a$, it is easy for Bob to find a valid hash $t$ and message $m$ that satisfy $t=T_{p_a,X_A^{b_H}} \cdot m$. For instance, Bob can choose $t=0$ (the null vector). From Theorem~8 in Ref.~\cite{LFSR-hashing}, we have that:
\begin{align}
    t &= T_{p_a,X_A^{b_H}} \cdot m \nonumber\\
    &=B D B^{-1} \cdot X_A^{b_H},
\end{align}
where $B$ is a non-singular $b_H \times b_H$ matrix that only depends on $p_a$ and $D$ is a diagonal matrix with elements $m(\lambda_i)$, where $m(x)$ is the polynomial associated to the message $m$ and $\lambda_1,\lambda_2,\dots,\lambda_{b_H}$ are the $b_H$ roots of the irreducible polynomial $p_a(x)$ over GF($2^{b_H}$). Having fixed $t=0$, it is enough for Bob to find a polynomial $m(x)$ such that $BDB^{-1}$ is the null matrix, i.e. such that $m(\lambda_i)=0$ for every $i$. In other words, Bob wants that every root of $p_a(x)$ (over GF($2^{b_H}$)) is also a root of $m(x)$. This can be readily achieved by choosing a message $m$ such that $p_a(x)$ divides $m(x)$, i.e. a polynomial that can be decomposed as: $m(x)=p_a(x) \cdot g(x)$, where $g(x)$ is an arbitrary polynomial over GF(2) of degree $b_M-b_H$. Then, the pair $t=0$ and the message associated to the polynomial $m(x)=p_a(x) \cdot g(x)$ are such that  $t=T_{p_a,X_A^{b_H}} \cdot m$ holds, which concludes the proof.  
\end{proof}

\subsection{QDS by García Cid \textit{et al.}} \label{sec:QDS-Indra}

In Ref.~\cite{QDS-INDRA}, García Cid \textit{et al.} propose a tripartite QDS scheme that combines QKD symmetric keys with NIST-recommended hash functions, which, however, are not IT secure, thus preventing the whole QDS scheme from being so. To make a fair comparison with the other QDS protocols addressed in this manuscript, we describe a modified version of the protocol from \cite{QDS-INDRA}, where we replace the computationally-secure hash functions with the $\mathcal{F}_{\rm AXU}$ family of hash functions described in Sec.~\ref{sec:notation}. Moreover, we require that Bob forwards the document-signature pair to Charlie via an authenticated channel, otherwise Alice could easily enforce a repudiation.

We point out that, in a recent work \cite{GeneralizedQDS}, the authors modified the original protocol from \cite{QDS-INDRA} in order to increase its security and efficiency. Although the scheme from Ref.~\cite{GeneralizedQDS} closely resembles the version we propose below, it is still not IT-secure as successful forgery attacks are always possible by exhaustive search, while this is not possible in our modified version.

\subsubsection{The modified protocol}

Similarly to the protocol by Yin \textit{et al.}, Alice establishes symmetric keys with Bob and Charlie with QKD. However, rather than combining the two keys via the XOR operation, she concatenates the two keys into a single key. Similarly to the protocol by Amiri \textit{et al.}, the protocol requires Bob and Charlie to communicate via authenticated secret channels and exchange a random subset of their symmetric keys. In this way, Alice cannot easily force a repudiation attack by adding noise in the signature destined to one specific receiver, since she does not know anymore the key held by each receiver.\\

\begin{protocol}  \label{QDSprot-INDRA} \caption{QDS protocol (modified from \cite{QDS-INDRA})}
\begin{enumerate}[wide, labelwidth=!, labelindent=0pt]

\item \textit{Distribution stage}\quad Alice establishes symmetric keys with Bob and Charlie. Bob and Charlie exchange random key blocks.

\begin{enumerate}[label*=\arabic*.]
    \item Alice runs a QKD protocol with Bob (Charlie) and establishes a shared secret key $X_B$ ($X_C$) of length $3n b_H$ bits, which can be divided into $n$ blocks of length $3b_H$ each: $X_B=X^1_B || X^2_B || \dots || X^n_B $ for Bob's key and as $X_C= X_C^1 || X_C^2 || \dots || X_C^n$ for Charlie's key, where the ''$||$'' symbol indicates concatenation. Each block in Bob's key (Charlie's key) can be decomposed as: $X_B^j=s^j_B || r^j_B$ ($X_C^j=s^j_C || r^j_C$), where $s_B^j$ ($s_C^j$) is $b_H$ bits long and $r^j_B$ ($r^j_C$) is $2b_H$ bits long.

    \item Bob (Charlie) selects a random permutation $\gamma_B \in S_n$ ($\gamma_C \in S_n$) from the set $S_n$ of permutations of $n$ elements and applies the permutation on their key blocks. Bob obtains the reshuffled key: $X_B':= X_B^{\gamma_B(1)} || X_B^{\gamma_B(2)}|| \dots || X_B^{\gamma_B(n)}$ and Charlie obtains the reshuffled key: $X_C':= X_C^{\gamma_C(1)} || X_C^{\gamma_C(2)} || \dots || X_C^{\gamma_C(n)}$.

    \item Bob sends the first $n/2$ blocks of $X_B'$ to Charlie through the authenticated secret channel, together with their positions: $\gamma_B(1), \dots, \gamma_B(n/2)$. The positions can be encoded into $(n/2) \log_2 n$ bits sent over the authenticated secret channel. Likewise, Charlie sends the first $n/2$ blocks of $X_C'$ to Bob, together with the positions $\gamma_C(1), \dots, \gamma_C(n/2)$.
 \end{enumerate}
 
\item \textit{Messaging stage}\quad Alice sends the signed message to Bob, who verifies it and forwards its to Charlie for verification.

\begin{enumerate}[label*=\arabic*.]
    \item The message $Doc$ is signed by $2n$ different hash functions from $\mathcal{F}_{\rm AXU}$, generating $2n$ signatures. Each hash function is obtained from a block of Bob's or Charlie's key. For generating the $j$-th signature  ($1 \leq j \leq n$), Alice generates an LFSR-based Toeplitz matrix $T_{p^j_B,s^j_B}$, i.e. an element of $\mathcal{F}_{\rm AXU}$ (see Appendix~\ref{app:efficient-hashing}), where the initial vector $s^j_B$ is taken from the $j$-th block of Bob's key $X_B$ and the irreducible polynomial $p^j_B$ is obtained by randomly selecting a $b_H$-bit string and by checking that the corresponding polynomial over GF(2), $p^j_B(x)=x^{b_H} + (p^j_B)_{n-1} x^{b_H-1} + \dots + (p^j_B)_1 x + (p^j_B)_0$, is irreducible (see algorithm in Supplementary Material of \cite{QDS-Chen}). Similarly, for signing the $n+j$-th block, Alice generates the Toeplitz matrix $T_{p^j_C,s^j_C}$, where $p^j_C (x)$ is an irreducible polynomial of degree $b_H$ and $s^j_C$ is taken from the $j$-th block of Charlie's key $X_C$.
    
    \item Alice generates the signature $Sig=Sig_1||\dots||Sig_{2n}$ by repeatedly hashing the document and encodes the hashes with the $r^j_B$ and $r^j_C$ keys from Bob's and Charlie's keys. The first $n$ signatures are given by:
    \begin{align}
        Sig_j := (T_{p^j_B,s^j_B}\cdot Doc || p^j_B) \oplus r^j_B \quad j=1,\dots,n,
    \end{align}
    while the following $n$ signatures are given by:
    \begin{align}
        Sig_{j+n} := (T_{p^j_C,s^j_C}\cdot Doc||p^j_C) \oplus r^j_C \quad j=1,\dots,n.
    \end{align}
    Alice sends the pair $\{Doc,Sig\}$ to Bob over a public channel.

    \item Bob discards all the signatures except those that he can legitimately verify, i.e. the $3n/2$ signatures for which he holds the decryption key, namely, $Sig_1, \dots, Sig_n$ and $Sig_{n+\gamma_C(1)}, \dots, Sig_{n+\gamma_C(n/2)}$. For each signature $Sig_j$, with $1 \leq j \leq n$, Bob uses $X^j_B$ to recover the hash of the document and the irreducible polynomial, by computing $Sig_j \oplus r^j_B = (h'_j || {p'}^j_B)$. Then, he verifies the signature by checking whether $h'_j = T_{{p'}^j_B,s^j_B}\cdot Doc$ is satisfied. If not, Bob rejects Alice's signature and sends the symbol $\perp$ to Charlie over an authenticated channel. Analogously, for each signature of the form $Sig_{n + \gamma_C(i)}$, Bob uses the key $X^{\gamma_C(i)}_C$ received from Charlie to recover the hash and the irreducible polynomial: $Sig_{n + \gamma_C(i)} \oplus r^{\gamma_C(i)}_C =(h'_{n+\gamma_C(i)}||{p'}^{\gamma_C(i)}_C)$. Now, Bob verifies that $h'_{n+\gamma_C(i)}=T_{{p'}^{\gamma_C(i)}_C,s^{\gamma_C(i)}_C}\cdot Doc$. If this is not true, Bob aborts and sends $\perp$ to Charlie. If Bob finds no error during verification, he accepts $Doc$ as original and forwards the pair $\{Doc,Sig\}$ to Charlie over an authenticated channel.
    
    \item Charlie verifies the signatures independently of Bob in an analogous way. In particular, he only verifies the signatures $Sig_{n+1}, \dots, Sig_{2n}$ and $Sig_{\gamma_B(1)}, \dots, Sig_{\gamma_B(n/2)}$ for which he holds the decryption keys. If Charlie observes more than $e_{max}$ mismatches out of the $3n/2$ verified signatures, he rejects Alice's signature, otherwise he accepts the document as original. 
\end{enumerate}
\end{enumerate}
\end{protocol}

\subsubsection{Security proof}
We prove the security of the modified protocol that we introduced, Protocol~\ref{QDSprot-INDRA}, against forgery and repudiation. In doing so, we account for the potential failure of the authenticated channel between Bob and Charlie.

\begin{lemma} \label{lm:forgery-INDRA}
    Protocol~\ref{QDSprot-INDRA} is $\varepsilon_{\rm for}$-secure against forgery according to Definition~\ref{def:forgery}, with
    \begin{align}
        \varepsilon_{\rm for} = 
        \Xi(n/2-e_{max},n/2,b_M 2^{1-b_H}) ,
    \end{align}
    where
    \begin{align}
        \Xi(k,n,p):= \sum_{j=k}^{n} \binom{n}{j} p^j (1-p)^{n-j}. \label{Xi}
    \end{align}
\end{lemma}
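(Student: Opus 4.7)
The plan is to reduce Bob's forgery attack to a collection of independent Wegman--Carter authentication attacks, one per Charlie-block whose key Bob does not possess, and then apply a binomial tail bound to the number of successful per-block forgeries required for Charlie to accept.

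First, I would catalog precisely which signatures Charlie verifies and which of those Bob can forge perfectly. Charlie checks all $n$ Charlie-block signatures $Sig_{n+1},\dots,Sig_{2n}$ together with the $n/2$ Bob-block signatures $Sig_{\gamma_B(1)},\dots,Sig_{\gamma_B(n/2)}$ that Bob forwarded to him in the distribution stage, for a total of $3n/2$ verifications. Bob knows the full key $X_B$ (so he can produce matching signatures for all $n/2$ Bob-blocks that Charlie rechecks) and also the $n/2$ Charlie-block keys $\{X_C^{\gamma_C(i)}\}_{i=1}^{n/2}$ received over the authenticated secret channel. This leaves exactly $n/2$ Charlie-block signatures for which Bob has no key information, and these are the signatures he must genuinely forge.

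Second, for each of those $n/2$ unknown Charlie-blocks I would argue that Bob's per-signature success probability is at most $\varepsilon := b_M\, 2^{1-b_H}$. The key triple $(s_C^j, p_C^j, r_C^j)$ is uniformly random from Bob's perspective: even in the case where Alice does sign and Bob observes $Sig_{n+j}$, the OTP with the $2b_H$-bit pad $r_C^j$ perfectly hides $(T_{p_C^j,s_C^j}\cdot Doc,\,p_C^j)$. Bob's task thus reduces to the standard WC forgery game with a hash function drawn from $\mathcal{F}_{\rm AXU}$ and an OTP-protected tag, whose success probability is bounded by the $\varepsilon$-AXU$_2$ parameter of the family reported in Table~\ref{tab:efficient-hashing}. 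This also subsumes the sub-case in which Alice does not sign at all, where Bob has even less information and the per-block success probability is only $2^{-b_H}$.

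Third, since the key triples $(s_C^j, p_C^j, r_C^j)$ are mutually independent across the $n/2$ unknown blocks, the per-block forgery outcomes are independent for any fixed choice of $(Doc',Sig')$ by Bob. Charlie accepts if and only if he observes at most $e_{max}$ mismatches among his $3n/2$ verifications; because Bob forges the $n$ ``known'' signatures correctly, mismatches can only occur among the $n/2$ unknown Charlie-blocks, so acceptance requires at least $n/2-e_{max}$ successful forgeries there. Bounding this by the binomial tail yields $\varepsilon_{\rm for} \le \Xi(n/2-e_{max},\, n/2,\, \varepsilon)$, which is the claimed bound.

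The principal obstacle will be the second step: justifying cleanly that an adversary who tampers not only with the encrypted hash but also with the encrypted irreducible polynomial $p_C^j$ (that is, chooses a nonzero offset $\delta_p$) cannot exceed the $\varepsilon$-AXU$_2$ bound. One must verify that $T_{p\oplus \delta_p,s}\cdot Doc'$ is, from Bob's standpoint, as unpredictable as $T_{p,s}\cdot Doc'$, so that modifying the polynomial does not improve the per-signature attack probability beyond $\varepsilon$. A short argument using the LFSR/Toeplitz structure and the uniformity of $s_C^j$ should suffice, but it needs to be spelled out carefully before invoking the family's AXU$_2$ parameter.
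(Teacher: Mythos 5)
Your proposal follows essentially the same route as the paper: Bob can perfectly forge every signature Charlie checks except the $n/2$ Charlie-block signatures whose keys he lacks, each of which reduces to an OTP-protected Wegman--Carter forgery against $\mathcal{F}_{\rm AXU}$ with per-block success probability $b_M 2^{1-b_H}$, and the binomial tail $\Xi(n/2-e_{max},n/2,b_M 2^{1-b_H})$ follows (the paper handles the ``Alice does not sign'' case as a separate, dominated case, exactly as you note in passing). The obstacle you flag --- whether XORing a nonzero offset into the OTP-encrypted polynomial $p^j_C$ could beat the AXU$_2$ bound --- is likewise left implicit in the paper, whose proof simply takes the optimal forgery to be $Sig'_{j+n}=Sig_{j+n}\oplus(t_j || 0)$ with the polynomial part untouched.
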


\begin{proof}
    There are two cases of forgery attacks: 1) Alice does not sign any document and Bob forges a new pair $\{Doc',Sig'\}$; 2) Alice sends $\{Doc,Sig\}$ to Bob and Bob forges a modified pair $\{Doc',Sig'\}$.

    Let us first discuss case 1). In this case, Bob generates a document $Doc'$ and needs to provide Charlie with the correct signature. In particular, since Charlie only verifies a subset of signatures, given by: $Sig'_{n+1}, \dots, Sig'_{2n}$ and $Sig'_{\gamma_B(1)}, \dots, Sig'_{\gamma_B(n/2)}$, Bob needs to provide valid signatures for this subset. For the signatures $Sig'_{\gamma_B(1)}, \dots, Sig'_{\gamma_B(n/2)}$, Bob knows the corresponding decryption keys from $X_B$, in particular the strings $r^{\gamma_B(i)}_B$, which he passed to Charlie in the distribution stage. Thus, Charlie will see no error when verifying these signatures.
    
    In contrast, for the $n$ signatures $Sig'_{n+1}, \dots, Sig'_{2n}$, Bob only knows $n/2$ decryption keys, i.e. the keys $X^{\gamma_C(i)}_C$ (for $i=1,\dots,n/2$) that he received from Charlie. In particular, Bob does not know the strings $r^j_C$ for the $n/2$ keys that he did not receive from Charlie and needs to guess them in order for Charlie to approve his forged signatures. However, we recall that Charlie is allowed up to $e_{max}$ errors during verification. Therefore, Bob only needs to correctly guess the string $r^j_C$ at least $n/2-e_{max}$ times out of $n/2$ possibilities. Since each string $r^j_C$ is $2b_H$ bits long, the probability for $k\geq n/2-e_{max}$ successes out of $n/2$ trials reads:
    \begin{align}
        \Xi(n/2-e_{max},n/2,2^{-2b_H}), \label{INDRA-forgery-1}
    \end{align}
    where the function $\Xi(k,n,p)$ in \eqref{Xi} represents the probability of obtaining at least $k$ successes in $n$ trials with success probability $p$ for each trial. Thus, the quantity in \eqref{INDRA-forgery-1} represents the probability of a successful forgery attack by Bob for case 1).
    
    We now discuss case 2). Here, Bob receives a valid pair $\{Doc,Sig\}$ from Alice and wants to forge a new pair  $\{Doc',Sig'\}$ where $Doc' \neq Doc$. Like before, Bob can produce new valid signatures $Sig'_{\gamma_B(1)}, \dots, Sig'_{\gamma_B(n/2)}$ since Bob holds the corresponding keys and overrides Alice's choices of irreducible polynomials with his choices of polynomials. Likewise, Bob can forge the signatures $Sig'_{n+\gamma_C(i)}$ corresponding to the keys $X^{\gamma_C(i)}_C$ (for $i=1,\dots,n/2$) that he received from Charlie.
    
    For the remaining $n/2$ signatures $Sig_{j+n}$ received from Alice for which Bob does not hold the corresponding key, Bob has two possibilities.
    
    He can act as in case 1), by neglecting the signatures sent by Alice and forging new signatures. However, this strategy requires Bob to guess some of the encryption keys $r^j_C$. The probability of correctly guessing one of these keys is $2^{-2b_H}$.
    
    The second possibility is also Bob's best strategy since it succeeds with a higher probability: Bob wants to guess a message-tag pair $(m_j,t_j)$, with $m_j\neq 0$, such that $t_j=T_{p^j_C,s^j_C}\cdot m_j$ is satisfied. In doing so, note that Bob does not know either parameter of the Toeplitz matrix $T_{p^j_C,s^j_C}$ and that the tag in $Sig_{j+n}$ is protected by OTP. If Bob succeeds, he can forge the signature $Sig'_{j+n}=Sig_{j+n} \oplus (t_j||0)$ and choose the document $Doc'=Doc \oplus m_j$ such that Charlie will validate the forged signature with unit probability. The probability that Bob finds the correct pair $(m_j,t_j)$ is the probability of a successful attack on WG authentication implemented with the $\mathcal{F}_{\rm AXU}$ family and reads $b_M/2^{b_H-1}$ (see Table~\ref{tab:efficient-hashing}), where $b_M$ is the length of the document. Let us assume that Bob chooses the same message $m_1=m_2=\dots=m$ for each of the $n/2$ signatures, such that the forged document is uniquely defined by $Doc'=Doc \oplus m$. Then, the probability that Bob correctly guesses at least $n/2 - e_{max}$ pairs $(m,t_j)$, out of $n/2$ signatures, is given by:
    \begin{align}
        \Xi(n/2-e_{max},n/2,b_M 2^{1-b_H}), \label{INDRA-forgery-2}
    \end{align}
    and represents the probability of a successful forgery attack in the case 2).

    By comparing the probability of successful attacks from cases 1)  \eqref{INDRA-forgery-1} and 2) \eqref{INDRA-forgery-2}, we deduce that the attack from case 2) is always more likely to occur since $b_M 2^{1-b_H} > 2^{-2 b_H}$. This concludes the proof.
\end{proof}

\begin{lemma} \label{lm:repudiation-INDRA}
    Protocol~\ref{QDSprot-INDRA}, with $b_M + 4 n b_H > (n/2)\log_2 n$, is $\varepsilon_{\rm rep}$-secure against repudiation according to Definition~\ref{def:repudiation}, with
    \begin{align}
        \varepsilon_{\rm rep}=\max\left\lbrace \prod_{i=0}^{e_{max}} \frac{n/2 - i}{n-i}, \frac{b_M+4nb_H}{2^{b'_H-1}} \right\rbrace.
    \end{align}
\end{lemma}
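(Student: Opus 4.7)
The plan is to bound the probability of a successful repudiation by considering the two qualitatively distinct avenues available to a dishonest Alice and taking the maximum over them: (A) crafting a $\{Doc,Sig\}$ pair that exploits the asymmetry between Bob's and Charlie's verification thresholds while the authenticated channel operates correctly, and (B) tampering with one of the messages exchanged between Bob and Charlie over the authenticated channel.

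For strategy (A), the decisive observation is that Bob aborts on any single mismatch among the $3n/2$ signatures he verifies, whereas Charlie tolerates up to $e_{max}$ mismatches. Since Bob verifies every Bob's-key signature $Sig_1,\ldots,Sig_n$, Alice cannot corrupt any of these without triggering Bob's abort; she is forced to concentrate all corruptions in the $n$ Charlie's-key signatures $Sig_{n+1},\ldots,Sig_{2n}$, of which Bob verifies only the random half indexed by $\gamma_C(1),\ldots,\gamma_C(n/2)$. Because $\gamma_C$ is drawn uniformly in $S_n$ by Charlie and never revealed to Alice, I would model Bob's verified subset as a uniformly random $n/2$-subset of $\{1,\ldots,n\}$ independent of Alice's choice of corrupted positions. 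If Alice corrupts exactly $k$ of the Charlie's-key signatures, Charlie rejects iff $k\geq e_{max}+1$ and Bob accepts iff all $k$ corrupted positions lie in the complementary $n/2$-subset, an event of probability $\binom{n-k}{n/2}/\binom{n}{n/2}=\prod_{i=0}^{k-1}(n/2-i)/(n-i)$. This quantity is decreasing in $k$, so Alice's optimal choice is $k=e_{max}+1$, giving the first term of the claimed maximum.

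For strategy (B), I would first argue that the only authenticated communication whose modification directly enables the asymmetric outcome ``Bob accepts, Charlie rejects'' is Bob's forwarding of $\{Doc,Sig\}$ to Charlie in step 2.3: any tampering with the distribution-stage exchanges of key blocks or positions can only corrupt Bob's own verification, forcing him to send $\perp$ and precluding repudiation. The forwarded pair has length $b_M+4nb_H$ bits (the $b_M$-bit document plus $2n$ signatures of $2b_H$ bits each), so applying Table~\ref{tab:efficient-hashing} for the WC authentication based on the $\varepsilon$-AXU$_2$ family caps Alice's tampering probability at $(b_M+4nb_H)/2^{b'_H-1}$. The hypothesis $b_M+4nb_H>(n/2)\log_2 n$ is what guarantees this bound dominates the analogous tampering bounds for the shorter distribution-stage authenticated messages, yielding the second term of the maximum.

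The main obstacle I expect is rigorously justifying that distribution-stage tampering cannot combine with a signature-level choice to beat the bound of (A). Specifically, I would need to check that any successful forgery of the positions or of the OTP-encrypted key-block ciphertexts in step 1.3 necessarily triggers either a verification mismatch on Bob's side (making him reject) or leaves the joint verification statistics identical to the honest-channel analysis of (A). Once this is pinned down, comparing the two single-strategy bounds and taking the larger one delivers the stated $\varepsilon_{\rm rep}$.
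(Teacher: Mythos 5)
Your overall structure matches the paper's proof: split into an ideal-channel analysis and an authentication-failure analysis, take the maximum, and for the first part compute the probability that all $e_{max}+1$ corrupted Charlie-key signatures avoid the random half that Bob cross-checks. Your hypergeometric expression $\binom{n-k}{n/2}/\binom{n}{n/2}$ equals the paper's $\binom{n/2}{e_{max}+1}/\binom{n}{e_{max}+1}=\prod_{i=0}^{e_{max}}\frac{n/2-i}{n-i}$, and your explicit monotonicity-in-$k$ argument for why $k=e_{max}+1$ is optimal is a small improvement on the paper, which leaves that implicit.

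There is, however, one genuinely wrong claim in your strategy (B): that tampering with the distribution-stage exchanges ``can only corrupt Bob's own verification, forcing him to send $\perp$ and precluding repudiation.'' That holds for the Charlie$\to$Bob direction, but not for the Bob$\to$Charlie direction. If Alice successfully modifies the key blocks of $X'_B$ (or their positions $\gamma_B(1),\dots,\gamma_B(n/2)$) that Bob sends to Charlie, then \emph{Charlie} holds corrupted or misplaced keys and will see mismatches when verifying $Sig_{\gamma_B(1)},\dots,Sig_{\gamma_B(n/2)}$, while Bob's verification is untouched --- exactly the ``Bob accepts, Charlie rejects'' event. The paper therefore explicitly bounds these two attacks by $(3b_Hn/2)/2^{b'_H-1}$ and $((n/2)\log_2 n)/2^{b'_H-1}$ respectively and takes the maximum with the $(b_M+4nb_H)/2^{b'_H-1}$ bound for the forwarded pair; the hypothesis $b_M+4nb_H>(n/2)\log_2 n$ exists precisely to make the last term dominate. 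You do end up invoking that comparison, so your final formula is correct, but it sits in tension with your earlier dismissal of those channels; the dismissal should be dropped and replaced by the explicit three-way maximum. Your closing worry about hybrid attacks (distribution-stage tampering combined with signature corruption) is reasonable but is resolved by the observation, used implicitly in the paper, that any failed tampering attempt on an authenticated message aborts the protocol, so each repudiation strategy is bounded separately and the overall bound is their maximum.
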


\begin{proof}
    Let us first assume that the the authenticated channel between Bob and Charlie is ideal. In this case, Alice's only chance to force a rejection at Charlie is to deliberately introduce errors in some of the signatures verified by Charlie, namely $Sig_{n+1}, \dots Sig_{2n}$, such that Charlie observes $e_{max}+1$ errors in verification, thus rejecting the signature. However, Bob verifies $n/2$ of these signatures, namely $Sig_{n+\gamma_C(1)}, \dots, Sig_{n+\gamma_C(n/2)}$, through the keys $X_C^{\gamma_C(i)}$ (for $i=1,\dots,n/2$) received from Charlie. Since Bob accepts no error in verification and since the $n/2$ key blocks sent by Charlie to Bob are randomly selected, the probability that Alice distributes the $e_{max}+1$ errors in the $n/2$ signatures that are not verified by Bob is given by:
    \begin{align}
        \varepsilon &= \frac{\binom{n/2}{e_{max}+1}}{\binom{n}{e_{max}+1}} \nonumber\\
        &= \prod_{i=0}^{e_{max}} \frac{n/2 - i}{n-i}. \label{INDRA-repudiation1}
    \end{align}
    Note that if only one error is introduced by Alice in the signatures verified by Bob, Bob will reject the signature and the protocol aborts.
    
    Let us now consider the failure of the authenticated channel between Bob and Charlie, which we recall is implemented with an $\varepsilon$-AXU$_2$ family based on LFSRs with hashes of $b_H'$ bits. In order to force a repudiation, Alice can attempt to modify enough key blocks sent to Charlie by Bob. The probability that Alice successfully modifies the key blocks from $X'_B$ sent by Bob through the authenticated channel is at most $(3b_H n/2)/2^{b'_H-1}$. Alternatively, Alice can modify the positions $\gamma_B(1), \dots, \gamma_B(n/2)$ sent by Bob, such that Charlie will hold the right blocks but in the wrong positions. In this case, the probability of a successful attack is $((n/2)\log_2 n)/2^{b'_H-1}$. Finally, Alice could attempt to modify the pair $\{Doc,Sig\}$ when forwarded by Bob, with a success probability of at most $(b_M+4nb_H)/2^{b'_H-1}$. Since the protocol aborts when a message from Bob is not authenticated by Charlie, the maximum probability of a successful repudiation attack based on the failure of authentication is:
    \begin{align}
        \varepsilon' &= \frac{\max\left\lbrace3b_H n/2, (n/2)\log_2 n, b_M+4nb_H  \right\rbrace}{2^{b'_H-1}} \nonumber\\
        &=(b_M+4nb_H)/2^{b'_H-1}. \label{INDRA-repudiation2}
    \end{align}

    By combining \eqref{INDRA-repudiation1} and \eqref{INDRA-repudiation2}, the maximal probability that the protocol does not abort and that Charlie rejects the signature while Bob does not is $\max\{\varepsilon, \varepsilon'\}$, which concludes the proof.  
    
\end{proof}

\subsection{Amiri \textit{et al.}}  \label{sec:QDS-Andersson}
The authors in \cite{USS-andersson} construct an IT-secure QDS scheme with $N+1$ participants, i.e. a sender and $N$ receivers, $P_1, \dots, P_N$. The protocol is guaranteed to be secure if there is an honest majority of participants and it does not require a fixed party to be honest.

In the $N$-partite scenario, the concept of security against repudiation is more nuanced. Informally, the QDS protocol guarantees that if an honest receiver accepts a signature, then any other other honest receiver accepts it with high probability; this property is called \textit{transferability}. However, there are scenarios where transferability cannot be guaranteed. In these cases, the validity of a signature is collectively established by a majority vote resolution process. A successful repudiation attack would cause the majority vote process to reject a signature that was previously accepted by an honest receiver.

Nevertheless, when analyzing the same tripartite scenario of the other two QDS protocols, i.e. $N=2$, the concepts of transferability and non-repudiation basically reduce to the same requirement, namely, the sender cannot force the two receivers to disagree on the validity of the signature.

\subsubsection{The protocol}

The main idea of the protocol is the following. In the distribution stage, the sender distributes to each receiver a subset of hash functions from the $\mathcal{F}_{\rm ASU}$ family of Sec.~\ref{sec:notation} (see Appendix~\ref{app:efficient-hashing} for more details), through pairwise secret channels. Since this communication is secret, no receiver can play the part of the sender (security against forgery). Afterwards, each receiver partitions the set of received hash functions and sends different partitions to each other receiver through authenticated secret channels. This step shuffles around the set of hash functions held by each receiver, thus preventing the sender from knowing the hash functions held by a given receiver.

In the messaging stage, the sender signs a document with all the hash functions previously distributed to the receivers and appends the corresponding tags. Each receiver counts the number of mismatches between the hash values obtained with their set of hash functions and the corresponding tags, and only accepts the document if the number of mismatches is below threshold. Since the sender does not know the set of hash functions held by a given receiver, they cannot force certain receivers to reject and others to accept the signature (security against repudiation/transferability).

However, in a $N$-user scenario, the sender could collude with a subset of malicious receivers. In that case, the colluding coalition could determine a subset of the hash functions held by a given receiver, thus causing mismatches between the tags and their computed hash values. In order to still guarantee transferability, the authors introduce verification levels, which correspond to different error thresholds: the lower the verification level, the higher the error threshold. The protocol guarantees that if a document is verified by an honest receiver at verification level $l$, any other honest receiver will verify the signature at level $l-1$, with high probability.

We now formally describe the protocol. The protocol fixes a maximum number of dishonest participants it can tolerate, $\omega$, which must be inferior to the majority of participants: $\omega<(N+1)/2$. The maximum fraction of dishonest receivers it can tolerate when colluding with the sender is $d_R$ and is given by: $d_R=(\omega-1)/N$. Finally, the protocol fixes the maximum verification level at $l_{max}$, where $(l_{max}+1)d_R<1/2$.

Importantly, we explicitly specify which protocol steps require authenticated communication, a fact which is not clear from the original formulation of the protocol \cite{USS-andersson}. This helps clarifying the actual amount of resources (e.g. preshared key bits) consumed by the protocol, but also closes potential security loopholes preventing transferability and non-repudiation. \\

\begin{protocol}  \label{QDSprot-Andersson} \caption{QDS protocol \cite{USS-andersson}}
\begin{enumerate}[wide, labelwidth=!, labelindent=0pt]

\item \textit{Distribution stage}\quad The hash functions randomly selected by the sender are distributed to the receivers.

\begin{enumerate}[label*=\arabic*.]
    \item The sender selects uniformly at random (and with replacement) $N^2 k$ hash functions $(f_1,\dots, f_{N^2k})$ from $\mathcal{F}_{\rm ASU}$ (see Appendix~\ref{app:efficient-hashing}), where $k$ is a security parameter.
    \item Each receiver $P_i$ receives via a secret channel the $Nk$ functions: $(f_{(i-1)Nk+1},\dots, f_{iNk})$.
    \item Each receiver $P_i$ randomly partitions the set of indexes corresponding to the received hash functions, $\{(i-1)Nk+1, \dots, iNk\}$, into $N$ sets of size $k$ denoted $R_{i \to 1}, \dots, R_{i \to N}$. The receiver then sends the set $R_{i \to j}$ and the set of corresponding hash functions, $F_{i \to j}:= \{f_r: r \in R_{i \to j}\}$, to $P_j$, using authenticated secret channels [Note that the sets $R_{i \to i}$ and $F_{i \to i}$ are kept by $P_i$]. In this way, each participant $P_i$ holds $Nk$ functions, given by: $F_i:= \cup_{j=1}^N F_{j \to i}$, and their positions, given by: $R_i:= \cup_{j=1}^N R_{j \to i}$.
 \end{enumerate}
 
\item \textit{Messaging stage}\quad The sender sends the signed message and each receiver independently verifies the signature.

\begin{enumerate}[label*=\arabic*.]
    \item The sender sends the pair $\{Doc,Sig\}$ to the desired recipient, $P_i$, over a public channel, where
    \begin{align}
        Sig &:= \left(f_1(Doc), f_2(Doc), \dots, f_{N^2k}(Doc)\right)  \nonumber\\
        &= (t_1, \dots, t_{N^2k}).
    \end{align}
    \item Participant $P_i$ verifies the signature $Sig$ received by the sender for decreasing verification levels $l$, until it either accepts the signature or it runs out of levels ($l=0$ is the last level of verification).\\
    Initially, $P_i$ sets the verification level to $l=l_{max}$. For a fixed $j$, participant $P_i$ tests if the $k$ tags generated from the hash functions received by $P_j$ match the tags received by the sender. Thus, receiver $P_i$ defines the test:
    \begin{align}
        T^{Doc}_{i,j,l} = \left\lbrace\begin{array}{ll}
          1   & \mbox{if }\sum_{r \in R_{j \to i}} g(t_r,f_r(Doc)) < s_l k  \\
        0  & \mbox{else}
        \end{array} \right.
    \end{align}
    where $g(x,y)=0$ ($g(x,y)=1$) if $x=y$ ($x\neq y$) and $1/2>s_{-1}>s_0>\dots>s_{l_{max}}>0$ are parameters fixed by the protocol. The test is passed if $T^{Doc}_{i,j,l}=1$.\\
    The participant $P_i$ performs the tests $T^{Doc}_{i,j,l}$ for each $j$ at verification level $l$. The output of the verification at level $l$ is given by:
    \begin{align}
        \mathrm{Ver}_{i,l}(Doc,Sig) = \left\lbrace\begin{array}{ll}
        \mathrm{True}  & \mbox{if } \sum_{j=1}^N T^{Doc}_{i,j,l} >N \delta_l \\
        \mathrm{False} & \mbox{else}
        \end{array} \right.
    \end{align}
    where $\delta_l = 1/2 + (l+1)d_R$. If the verification failed ($\mathrm{Ver}_{i,l}(Doc,Sig)=\mathrm{False}$), the participant $P_i$ decreases by one the verification level and attempts a new verification. The verification process ends when $\mathrm{Ver}_{i,l}(Doc,Sig)=\mathrm{True}$ for some $l \geq 0$, in which case participant $P_i$ accepted the signature at level $l$. Otherwise, $P_i$ rejected the signature. 
    \item If the receiver $P_i$ accepted the signature at some level $l \geq 0$, they forward the pair $\{Doc,Sig\}$ to the next participant via an authenticated channel. Otherwise, the receiver forwards $\perp$ to all remaining participants, flagging that the signature was rejected.
\end{enumerate}
\end{enumerate}
\end{protocol}
\vspace{2ex}

\subsubsection{Security definitions}

Here we report and comment on the security definitions adopted in Ref.~\cite{USS-andersson}. They can be seen as a generalization of Definitions~\ref{def:forgery} and \ref{def:repudiation} presented in Sec.~\ref{sec:notation}.

\begin{definition} \label{def:forgery-Andersson}
    [Forgery]  Let $C \subset \{P_1,\dots,P_N\}$ be a coalition of malicious users, not including the sender, which knows a valid pair $\{Doc,Sig\}$. Then, the QDS protocol is $\varepsilon_{\rm for}$-secure against forgery if, for every forged pair $\{Doc',Sig'\}$ produced by $C$ with $Doc' \neq Doc$, it holds:
    \begin{align}
        \Pr\left[\exists\, P_i \notin C : \,\mathrm{Ver}_{i,0}(Doc',Sig')=\mathrm{True}  \right] \leq \varepsilon_{\rm for}.
    \end{align}
\end{definition}

\begin{definition} \label{def:transferability-Andersson}
    [Non-transferability]  Let $C \subset \{P_0,P_1,\dots,P_N\}$ be a coalition of malicious users including the sender. Then, the QDS protocol is $\varepsilon_{\rm transf}$-secure against non-transferability if, for every pair $\{Doc,Sig\}$ produced by $C$ and every verification level $l \geq 1$, it holds:
    \begin{align}
        \Pr&\left[\exists\, P_i,P_j \notin C : \,\mathrm{Ver}_{i,l}(Doc,Sig)=\mathrm{True} \right.\nonumber\\
        &\left.\quad \wedge   \mathrm{Ver}_{j,l'}(Doc,Sig)=\mathrm{False}\right] \leq \varepsilon_{\rm transf}, \label{nontrasf-condition}
    \end{align}
    where $0 \leq l' < l$.
\end{definition}

We observe that, according to the description of Protocol~\ref{QDSprot-Andersson}, if a signature is verified at level $l$, then it would be verified also at any lower level. That is,
\begin{align}
    &\mathrm{Ver}_{i,l}(Doc,Sig)=\mathrm{True} \implies \nonumber\\
    &\quad\quad\quad \mathrm{Ver}_{i,l'}(Doc,Sig)=\mathrm{True}, \,\, \forall \, l'<l. \label{verfication-cascade}
\end{align}
Thus, the condition in \eqref{nontrasf-condition} can be checked for $l'=l-1$ only.

Moreover, we argue that the transferability definition reported in Definition~\ref{def:transferability-Andersson} does not capture transferability in a meaningful way. Indeed, it only requires that if an honest receiver accepts a message-signature pair $\{Doc,Sig\}$, then another honest receiver accepts the \textit{same} pair $\{Doc,Sig\}$ with high probability. The definition does not account for attacks that may modify the pair $\{Doc,Sig\}$ to be verified by the second receiver. This attack is easily carried out by, e.g., a dishonest receiver who forwards a modified pair to an honest receiver. The protocol has no way to avoid this attack from happening. Alternatively, an attacker could alter the $\{Doc,Sig\}$ pair in the transmission between two honest receivers, by attacking their authenticated channel. We remark that, in the original paper \cite{USS-andersson}, this channel is not explicitly required to be authenticated, hence the attack would always succeed.

When there are disputes caused e.g. by the sender refusing to recognize a signature validated by a legitimate receiver, the participants can invoke a majority vote dispute resolution method, $\mathrm{MV}(Doc,Sig)$, whose outcome is the official accepted outcome. A repudiation attack is successful if the outcome of the majority vote dispute resolution contradicts the outcome of an honest receiver that accepted the signature.

\begin{definition} \label{def:repudiation-Andersson}
    [Repudiation]  Let $C \subset \{P_0,P_1,\dots,P_N\}$ be a coalition of malicious users including the sender. Then, the QDS protocol is $\varepsilon_{\rm rep}$-secure against repudiation if, for every pair $\{Doc,Sig\}$ produced by $C$ and every verification level $l \geq 0$, it holds:
    \begin{align}
        \Pr&\left[\exists\, P_i\notin C : \,\mathrm{Ver}_{i,l}(Doc,Sig)=\mathrm{True} \right.\nonumber\\
        &\left.\quad \wedge \,\mathrm{MV}(Doc,Sig) = \mathrm{Invalid}\right] \leq \varepsilon_{\rm rep}.
    \end{align}
\end{definition}

The $\mathrm{MV}(Doc,Sig)$ method is invoked, for example, when a receiver validates a signature at verification level $l=0$ (this can happen if a malicious coalition forces the failure of all other verification levels, except at $l=0$) and wishes that other receivers would validate the signature as well. The protocol, however, does not guarantee that the next receiver will validate the signature, since transferability is only guaranteed up to level $l=1$ (c.f. Definition~\ref{def:transferability-Andersson}). In this case, the $\mathrm{MV}(Doc,Sig)$ allows the receiver to gain  confirmation from the other users that the signature is authentic. 

The majority vote dispute resolution method is defined following the the original protocol \cite{USS-andersson},
\begin{align}
    &\mathrm{MV}(Doc,Sig) = \nonumber\\
    &\left\lbrace \begin{array}{ll}
      \mathrm{Valid}   & \mbox{if } \abs{\{i: \mathrm{Ver}_{i,-1}(Doc,Sig)=\mathrm{True} \}} \geq \floor{N/2} +1 \\
       \mathrm{Invalid}  & \mbox{else,}
    \end{array} \right. \label{MV}
\end{align}
in contrast to the amendment made in Ref.~\cite{QDS-Kiktenko}, where the verification level used in the dispute resolution is $l=0$, that is, $\mathrm{Ver}_{i,-1}$ is replaced by $\mathrm{Ver}_{i,0}$.

In Ref.~\cite{QDS-Kiktenko}, the authors argue that a security loophole arises if the verification is carried out at level $l=-1$, since this case is not guaranteed secure against forgery (c.f. Definition~\ref{def:forgery-Andersson}). A malicious receiver could forge a signature and claim that they verified it at level $l=0$, thus invoking the dispute resolution method, hoping that the signature is accepted by the majority of users if the verification occurs at level $l=-1$. Indeed, Definition~\ref{def:forgery-Andersson} does not guarantee that the protocol can detect forgeries when the forged signature is verified at $l=-1$.

As said, the authors in Ref.~\cite{QDS-Kiktenko} close the loophole by raising the verification level of the dispute resolution method from $l=-1$ to $l=0$. However, in doing so, they lose the original meaning of security against repudiation. Indeed, now it could happen that the first receiver of a pair $\{Doc,Sig\}$ verifies at level $l=0$ and whishes to invoke $\mathrm{MV}(Doc,Sig)$. Since the verification therein occurs at the same level as the first receiver, the protocol cannot guarantee that the dispute resolution will agree with the first user with high probability -- as a matter of fact, transferability is only guaranteed between different verification levels (c.f. Definition~\ref{def:transferability-Andersson}). In other words, security against repudiation, as defined in Definition~\ref{def:repudiation-Andersson}, is lost.

To avoid losing this aspect of the protocol, we solve the loophole issue by computing the probability of success of the above-described forgery attack and by showing that it remains small and of the order of $\varepsilon_{\rm for}$ in the scenarios of interest, thus removing the need for a redefinition of \eqref{MV} (see Lemma~\ref{lm:forgery-Andersson}).

\subsubsection{Security proof}

We prove the security of Protocol~\ref{QDSprot-Andersson} with respect to Definitions~\ref{def:forgery-Andersson}, \ref{def:transferability-Andersson}, and \ref{def:repudiation-Andersson}, improving the security parameters where possible while accounting for the failure of IT-secure authenticated channels, unlike the original paper \cite{USS-andersson}. In particular, a transferability attack is enabled by altering the hash functions destined to a given participant via successful attacks on the authenticated channels in step 1.3. Note that, in the original paper \cite{USS-andersson}, these channels are not explicitly required to be authenticated.

In the proofs, we assume that the protocol parameters $s_{-1},s_0,\dots,s_{l_{max}}$ are equally spaced in the interval $[0,1/2]$, with spacing $\Delta s=s_{l-1}-s_l$, and we assume that $\Delta s$ is maximal. This choice is optimal to reduce the attack probability on transferability and non-repudiation \cite{USS-andersson}. Moreover, we note that the constraints on the parameters $s_l$ imply the following constraint on the spacing: $(l_{max} + 1)\Delta s <1/2$.

\begin{lemma} \label{lm:forgery-Andersson}
    Protocol~\ref{QDSprot-Andersson} is $\varepsilon_{\rm for}$-secure against forgery according to Definition~\ref{def:forgery-Andersson}, with:
    \begin{align}
        \varepsilon_{\rm for}= (N-\omega)\,\,\Xi(\floor{N/2},N-\omega,p_t),
    \end{align}
    where $p_t$ is defined as:
    \begin{align}
        p_t =\Xi(\floor{k(1-s_0)}+1,k,2^{1-b_H}),
    \end{align}
    Moreover, the forgery attack based on the security loophole discussed in Ref.~\cite{QDS-Kiktenko}, where a dishonest receiver forges the pair $\{Doc',Sig'\}$ and invokes the dispute resolution method for the other parties to accept the pair, succeeds with probability:
    \begin{align}
        p_{\rm attack} = \Xi(\floor{N/2}+1 - \omega ,N-\omega,p_{-1}),
    \end{align}
    with:
    \begin{align}
        p_{-1} = \Xi(\floor{N/2}+1 - \omega ,N-\omega,p_t).
    \end{align}
\end{lemma}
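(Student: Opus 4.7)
The plan is to bound the forgery probability by analyzing each honest receiver separately and combining via a union bound. For a fixed $P_i \notin C$, the verification $\mathrm{Ver}_{i,0}(Doc',Sig')$ accepts iff more than $N\delta_0 = N/2+\omega-1$ of the subtests $T^{Doc'}_{i,j,0}$ evaluate to $1$, i.e.\ at least $\floor{N/2}+\omega$ must pass. I would classify these $N$ subtests by the identity of $P_j$: if $P_j \in C$, the coalition knows all $k$ hash functions in $F_{j\to i}$ (either because these were sent by the dishonest $P_j$ in step 1.3, or because $P_j$ was their original recipient in step 1.2), so the adversary can tailor the corresponding tags in $Sig'$ and render the subtest trivially passing; if $P_j \notin C$ (including $j=i$, where $P_i$ keeps $F_{i\to i}$ secret), the coalition has no information about the $k$ functions in $F_{j\to i}$ beyond the tags $f_r(Doc)$ already revealed in $Sig$, thanks to the secrecy of the channels used in steps 1.2 and 1.3.

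For each ``hard'' subtest I would treat the $k$ tag guesses as independent Bernoulli trials. By the $\varepsilon$-ASU$_2$ property of $\mathcal{F}_{\rm ASU}$ (Definition~\ref{def:almost-strongly-universal2}) applied to $Doc \neq Doc'$, the conditional probability that $f_r(Doc')=t'_r$ given the publicly known $f_r(Doc)$ is at most $\varepsilon = 2^{1-b_H}$ (Table~\ref{tab:efficient-hashing}). Hence a hard subtest passes -- i.e.\ fewer than $s_0 k$ mismatches, equivalently at least $\floor{k(1-s_0)}+1$ matches -- with probability at most $p_t=\Xi(\floor{k(1-s_0)}+1,k,2^{1-b_H})$. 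With at most $\omega$ free subtests contributing guaranteed passes, the remaining $N-\omega$ hard subtests must contribute at least $\floor{N/2}$ further passes, which occurs with probability at most $\Xi(\floor{N/2}, N-\omega, p_t)$. A union bound over the (at most) $N-\omega$ honest receivers yields the claimed $\varepsilon_{\rm for}$.

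For the loophole attack, a dishonest receiver forges $\{Doc',Sig'\}$ and triggers $\mathrm{MV}$, which evaluates verifications at level $l=-1$ with the laxer threshold $\delta_{-1}=1/2$, so $\mathrm{MV}$ outputs Valid iff at least $\floor{N/2}+1$ receivers accept. Assuming all $\omega$ dishonest receivers vote in favor, success requires at least $\floor{N/2}+1-\omega$ of the $N-\omega$ honest receivers to independently accept at level $-1$. I would repeat the free/hard subtest argument at this lower level -- with threshold parameter $s_{-1}$ in place of $s_0$ -- to obtain the per-receiver acceptance probability $p_{-1}$, and then combine via a binomial tail to recover the stated $p_{\rm attack}$. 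The main delicacy I anticipate is the careful bookkeeping of the rounded thresholds at both verification levels, together with the justification that the hard subtests can be treated as independent under a jointly optimized $Sig'$; the latter holds because the hash functions $f_r$ are drawn independently by the honest sender and every $f_r$ attached to a hard subtest is information-theoretically hidden from $C$.
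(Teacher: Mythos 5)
Your proposal follows essentially the same route as the paper's proof: the $\omega$ subtests coming from coalition members are forced passes, each hard subtest passes with probability at most $p_t=\Xi(\floor{k(1-s_0)}+1,k,2^{1-b_H})$ via the per-tag guessing bound $2^{1-b_H}$ from the $\varepsilon$-ASU$_2$ property, the residual threshold $\floor{N/2}$ over the $N-\omega$ hard subtests gives the binomial tail, a union bound over honest receivers yields $\varepsilon_{\rm for}$, and the loophole analysis with the relaxed threshold $\floor{N/2}+1-\omega$ at level $l=-1$ also matches. The one divergence is that you would use the threshold $s_{-1}$ for the per-subtest pass probability at level $-1$, whereas the stated formula (and the paper's own derivation) simply reuses $p_t$ computed with $s_0$; your version is arguably the more faithful treatment of the level-$(-1)$ test and would give a somewhat larger $p_{-1}$, so to literally recover the Lemma's expression for $p_{\rm attack}$ you would have to adopt the paper's convention of reusing $p_t$ at that level.
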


\begin{lemma} \label{lm:transferability&rep-Andersson}
    Protocol~\ref{QDSprot-Andersson} is $\varepsilon_{\rm transf}$-secure against non-transferability according to Definition~\ref{def:transferability-Andersson} and it is $\varepsilon_{\rm rep}$-secure against repudiation according to Definition~\ref{def:repudiation-Andersson}, with:
    \begin{align}
        \varepsilon_{\rm transf}=\varepsilon_{\rm rep}=\binom{N(1-d_R)}{2}  \max\{\varepsilon_{i,j},\varepsilon_{\rm auth},\varepsilon_{\rm hyb}\},
    \end{align}
    where
    \begin{align}
        \varepsilon_{i,j} &= N(1-d_R) \exp\left(- \frac{k}{8(l_{max}+1)^2}\right) \\
        \varepsilon_{\rm auth} &=\left(\frac{ky + k\log_2(Nk)}{2^{b'_H-1}}\right)^{N[1/2 -(l_{max}+1)d_R]}
    \end{align}
    \begin{align}
        \varepsilon_{\rm hyb} &=  \frac{ky + k\log_2(Nk)}{2^{b'_H-1}} \nonumber\\
        &\quad\Xi\left(\left\lfloor\frac{N}{2}\right\rfloor +1 , N(1-d_R) ,\exp\left(- \frac{k}{8(l_{max}+1)^2}\right) \right).
    \end{align}
\end{lemma}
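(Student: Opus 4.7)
I would establish non-transferability first and deduce repudiation security from it. Fix any pair of honest receivers $P_i,P_j\notin C$ and any verification level $l\geq 1$. By the cascade property~\eqref{verfication-cascade} it suffices to take $l'=l-1$, so the target event is
\begin{align*}
\mathcal{E}^{l}_{i,j} := \{\mathrm{Ver}_{i,l}(Doc,Sig)=\mathrm{True}\,\wedge\, \mathrm{Ver}_{j,l-1}(Doc,Sig)=\mathrm{False}\}.
\end{align*}
A union bound over the at most $\binom{N(1-d_R)}{2}$ honest pairs (and over the maximum verification level $l_{max}$) produces the prefactor of $\varepsilon_{\rm transf}$, so the task reduces to bounding $\Pr[\mathcal{E}^{l}_{i,j}]$ by $\max\{\varepsilon_{i,j},\varepsilon_{\rm auth},\varepsilon_{\rm hyb}\}$.

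\textbf{Concentration term.} For fixed $(i,j,l)$ I would condition on whether the coalition succeeds in forging any of the $N$ authenticated secret transmissions $(R_{m\to j},F_{m\to j})$ incoming to $P_j$ in step~1.3. Conditioned on no such failure, the partitions $R_{m\to i},R_{m\to j}$ chosen by each $P_m$ are uniformly random and statistically independent of the coalition's view, so for every source $m$ the block $R_{m\to j}$ is a uniformly random size-$k$ sub-sample of the $2k$ indices jointly held by $\{P_i,P_j\}$ originating from $P_m$. With the optimal spacing $\Delta s = 1/(2(l_{max}+1))$ (maximal under $(l_{max}+1)\Delta s<1/2$), the threshold gap between consecutive verification levels is $k\Delta s$, and a two-sided Hoeffding bound on each source-$m$ mismatch count yields a deviation probability at most $\exp(-k/(8(l_{max}+1)^2))$. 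Summing the $N$ source contributions via a union bound delivers $\varepsilon_{i,j}$.

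\textbf{Authentication and hybrid terms.} To violate non-transferability without relying on the above concentration failure, the coalition must alter enough of the messages sent over the authenticated channels to $P_j$ in step~1.3 to close the threshold gap $\delta_l-\delta_{l-1}$. Since $\delta_l=1/2+(l+1)d_R$, this requires at least $N[1/2-(l_{max}+1)d_R]$ independent WC authenticator forgeries. Each such message has payload $ky+k\log_2(Nk)$ bits (the $k$ function descriptors and their positions), so by the security of key-recycled WC authentication (Appendix~\ref{app:WC}) each individual forgery succeeds with probability at most $(ky+k\log_2(Nk))/2^{b'_H-1}$; independence of the per-channel authentication keys then yields $\varepsilon_{\rm auth}$. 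The hybrid term $\varepsilon_{\rm hyb}$ captures the intermediate strategy of forging a single authenticated channel and relying on Hoeffding concentration over the remaining honest receivers, producing one authentication factor times the binomial tail $\Xi(\lfloor N/2\rfloor+1,N(1-d_R),\exp(-k/(8(l_{max}+1)^2)))$. Taking the maximum of the three contributions absorbs all attack strategies.

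\textbf{Repudiation and main obstacle.} Repudiation follows because, conditional on some honest $P_i$ accepting at level $l\geq 0$, non-transferability implies that every other honest $P_j$ accepts at level $l-1\geq -1$ except with probability $\varepsilon_{\rm transf}$. Since the honest receivers form a majority ($|\{i:P_i\notin C\}|\geq\lfloor N/2\rfloor+1$), this forces the majority vote~\eqref{MV} at level $-1$ to return $\mathrm{Valid}$, so $\varepsilon_{\rm rep}\leq\varepsilon_{\rm transf}$. The main obstacle I expect is the concentration step: one must rigorously argue that, once the step-1.3 authenticated channels are intact, the coalition's tags are statistically independent of the random partitions $R_{m\to i},R_{m\to j}$, so that Hoeffding applies with truly independent coordinates and the correct sub-sample size, and one must verify the bookkeeping that turns the constraint $(l_{max}+1)\Delta s<1/2$ into the exact Hoeffding exponent $k/(8(l_{max}+1)^2)$. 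A secondary delicate point is showing that hybrid strategies combining few authentication forgeries with concentration cannot outperform the maximum of the three listed terms, which is what justifies taking $\max$ rather than a sum in the final bound.
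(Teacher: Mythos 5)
Your proposal follows essentially the same route as the paper's proof: the same three-way decomposition into a hash-corruption/Hoeffding attack, a pure authentication-forgery attack, and a hybrid attack, with the same union bound over honest pairs, the same optimal spacing $\Delta s = 1/(2(l_{max}+1))$ giving the exponent $k/(8(l_{max}+1)^2)$, and the same reduction of repudiation to the transferability bound extended to $l'=-1$ via the honest-majority argument. Two minor points: the security definition is stated per verification level, so no union over $l$ is needed (your parenthetical extra factor of $l_{max}$ would spoil the claimed prefactor), and the paper makes explicit the step you elide — after the $Nd_R$ dishonest sources deterministically split the two tallies, a necessary condition for disagreement is that at least one \emph{honest} source's test differs between $P_i$ and $P_j$, which is what yields the factor $N(1-d_R)$ rather than $N$ in $\varepsilon_{i,j}$.
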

The function $\Xi(k,n,p)$, already defined in \eqref{Xi}, represents the probability of at least $k$ successes out of $n$ trials with probability $p$. We present the security proofs of Lemmas~\ref{lm:forgery-Andersson} and \ref{lm:transferability&rep-Andersson} in Appendix~\ref{app:security-Andersson}.

\section{Performance optimization} \label{sec:efficiency-comp}

In this section we compare the performance of the three QDS protocols, presented in Sec.~\ref{sec:all-QDS}, in a tripartite scenario with one sender and two receivers.

Specifically, given a document of fixed length ($b_M$), we optimize the performance of each protocol over their free parameters, such that the number of preshared secret key bits per receiver ($\ell_P$) and the length of the signature ($\ell_S$) are minimized, under the constraint that $\varepsilon_{\rm rep} + \varepsilon_{\rm for} \leq 10^{-10}$ (c.f. Definitions~\ref{def:forgery} and \ref{def:repudiation}). In particular, Protocol~\ref{QDSprot-Chen} (based on the protocol by Yin \textit{et al.} \cite{QDS-Chen}) is optimized over the parameters $b_H,b'_H$. Protocol~\ref{QDSprot-INDRA} (a modified version of the protocol by García Cid \textit{et al.} \cite{QDS-INDRA}) is optimized over $n,b_H,b'_H, e_{max}$. Protocol~\ref{QDSprot-Andersson} (based on the protocol by Amiri \textit{et al.} \cite{USS-andersson}) is optimized over $k,b_H,b'_H$.

In Table~\ref{tab:protocols_params} we report the security parameters, the number of preshared key bits per receiver and the signature length of each protocol that are used in our optimizations. Note that Protocol~\ref{QDSprot-Andersson} is studied for $N=2$. In this case, the maximum number of dishonest parties is $\omega=1$, the fraction of dishonest receivers that collude with the sender is $d_R=0$ and the maximum number of transfers between receivers is $l_{max}=1$. Moreover, with the optimal choice of spacing between the $s_l$ variables, we have: $s_0\approx \frac{1}{2} - \Delta s=\frac{1}{2} - \frac{1}{2(l_{max}+1)}=\frac{1}{4}$.

It is interesting to notice that, for $N=2$, the success probability ($p_{\rm attack}$) of the forgery attack based on the loophole noted in Ref.~\cite{QDS-Kiktenko} coincides with the regular forgery probability ($\varepsilon_{\rm for}$). Thus, in reality, the attack mentioned in Ref.~\cite{QDS-Kiktenko} does not give rise to a security loophole in the analyzed scenario.

\begin{table*}[t]
\renewcommand\arraystretch{2.5}
\setlength{\tabcolsep}{3pt}
\centering
\begin{tabular}[t]{|p{0.20\linewidth}|p{0.2\linewidth}|p{0.05\linewidth}|p{0.23\linewidth}|p{0.25\linewidth}|}
\toprule
\textbf{Protocol} & $\bm{\ell}_{\bm{P}}$ & $\bm{\ell}_{\bm{S}}$ & $\bm{\varepsilon}_{\rm\bf for}$ & $\bm{\varepsilon}_{\rm\bf rep}$ \\
\midrule
Protocol~\ref{QDSprot-Chen} from Ref.~\cite{QDS-Chen} & $3 b_H + 5b'_H$ & $2b_H$ & $b_M/2^{b_H-1}$ & $(2b_H+b_M)/2^{b'_H-1}$ \\
Protocol~\ref{QDSprot-INDRA} modified from Ref.~\cite{QDS-INDRA} & $6 n b_H + n \log_2 n + 7b'_H$ & $4nb_H$ & $\Xi(n/2-e_{max},n/2,b_M 2^{1-b_H})$ & $\max\left\lbrace \prod_{i=0}^{e_{max}} \frac{n/2 - i}{n-i}, \frac{b_M+4nb_H}{2^{b'_H-1}} \right\rbrace$ \\
Protocol~\ref{QDSprot-Andersson} from Ref.~\cite{USS-andersson} & $3ky + k\log_2(2k) +7 b'_H$ & $4k b_H$ & $\Xi\left(\left\lfloor\frac{3}{4}k \right\rfloor +1,k,2^{1-b_H}\right)$ & $\max \left\lbrace 2 e^{-k/32}, \frac{ky + k\log_2(2k)}{2^{b'_H-1}} \right\rbrace$ \\
\bottomrule
\end{tabular}
\caption{For each analyzed protocol, we report the security parameters against forgery ($\varepsilon_{\rm for}$) and repudiation attacks ($\varepsilon_{\rm rep}$), as well as the consumed preshared secret key bits per receiver ($\ell_P$) and the length of the signature ($\ell_S$). Each protocol is used to sign a document of $b_M$ bits with hash functions drawn from either the $\mathcal{F}_{\rm AXU}$ family (Protocols~\ref{QDSprot-Chen} and \ref{QDSprot-INDRA}) or the $\mathcal{F}_{\rm ASU}$ family (Protocol~\ref{QDSprot-Andersson}), producing hashes of $b_H$ bits. The parameter $y$ represents the number of bits required to specify an element of $\mathcal{F}_{\rm ASU}$ and is given in Appendix~\ref{app:efficient-hashing}. The function $\Xi$ is given in \eqref{Xi}. }\label{tab:protocols_params}
\end{table*}

\subsection{Calculation of $\ell_P$}

In order to achieve a fair comparison, we consider that every secret channel is implemented via OTP, thus requiring a number of preshared key bits equal to the length of the message. Moreover, every authenticated channel is implemented as described in Sec.~\ref{sec:notation}, via WC authentication with key recycling and tags of $b'_H$ bits. In the following, we illustrate how to obtain the number of preshared key bits ($\ell_P$) reported for each protocol in Table~\ref{tab:protocols_params}.

\textbf{Protocol~\ref{QDSprot-Chen}}\quad The protocol requires each receiver to hold a preshared secret key of length $3 b_H$. Moreover, to establish the authenticated channel between Bob and Charlie, each receiver consumes $2b'_H$ preshared bits. The three messages that are sent over the authenticated channel amount to additional $3b'_H$ consumed preshared bits (needed to encrypt the tags).

\textbf{Protocol~\ref{QDSprot-INDRA}}\quad The protocol requires each receiver to share a $3nb_H$-bit secret key with Alice. Then, it consumes $3nb_H + n \log_2 n$ preshared bits to exchange half of the key blocks between Bob and Charlie, together with their positions. To establish the authenticated channel between Bob and Charlie, each receiver consumes $2b'_H$ preshared bits. Bob and Charlie exchange $4$ authenticated messages in the distribution stage; moreover, Bob forwards the $\{Doc,Sig\}$ pair on the authenticated channel in the messaging stage, for a total of $5$ authenticated messages.

\textbf{Protocol~\ref{QDSprot-Andersson}}\quad To carry out step 1.2 of Protocol~\ref{QDSprot-Andersson}, each receiver must hold $Nky$ preshared key bits with the sender, where $y$ is the number of bits required to specify one element of the $\mathcal{F}_{\rm ASU}$ family and is provided in Appendix~\ref{app:efficient-hashing}. For step 1.3 of the protocol, each receiver $P_i$ sends to another receiver, $P_j$, $k$ hash functions from $\mathcal{F}_{\rm ASU}$ and their positions, thus consuming a total of $(N-1)(ky + k\log_2 (Nk))$ preshared secret bits. Moreover, this communication is also authenticated. Hence, it requires each receiver to share, with any other receiver, $2b'_H$ bits to agree on the hash function and additional $4b'_H$ bits to exchange the hash functions and their positions. Finally, $b'_H$ preshared bits are consumed to forward the $\{Doc,Sig\}$ pair to the next recipient in an authenticated manner. The resulting number of preshared key bits reads: $\ell_P=N k y + (N-1)(ky+k\log_2(Nk)) + (N-1)6 b'_H + b'_H$. In Table~\ref{tab:protocols_params} we specify it for $N=2$.

\subsection{Optimization results}

\begin{figure}[htb]
    \centering
    \includegraphics[width=1\linewidth,keepaspectratio]{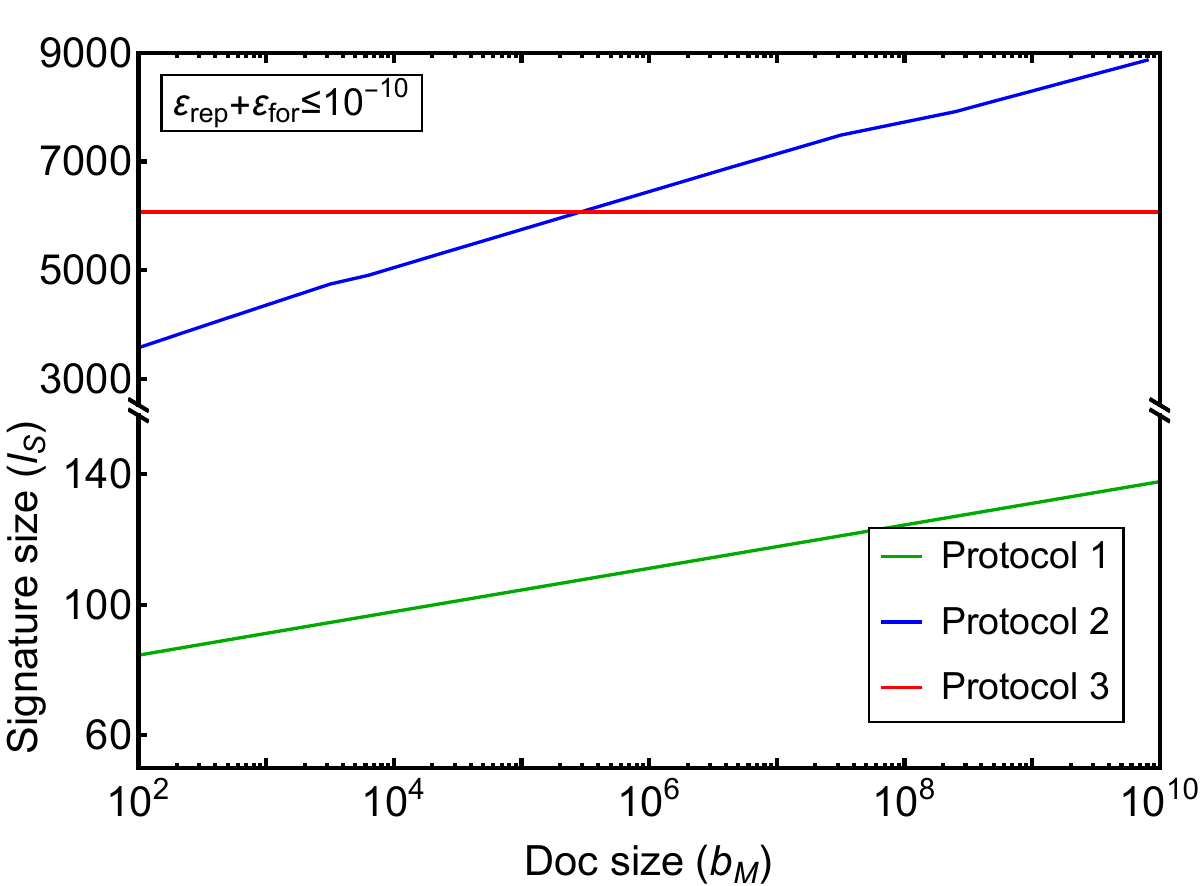}
    \caption{The optimal signature length ($\ell_S$) as a function of the document size ($b_M$) for the three analyzed QDS protocols, when minimized under the constraint: $\varepsilon_{\rm rep} + \varepsilon_{\rm for} \leq 10^{-10} $.}
    \label{fig:signature-tripartite}
    \includegraphics[width=1\linewidth,keepaspectratio]{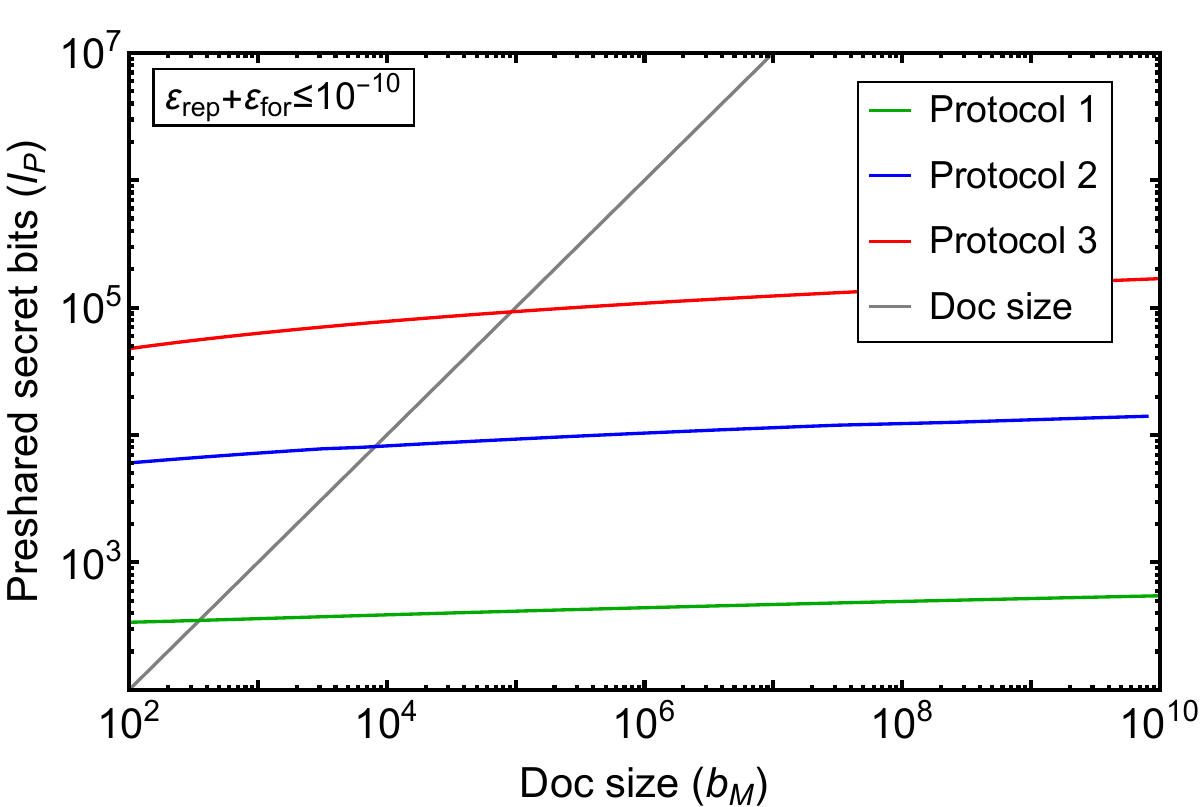}
    \caption{The optimal number of preshared secret bits ($\ell_P$) as a function of the document size ($b_M$) for the three analyzed QDS protocols, when minimized under the constraint $\varepsilon_{\rm rep} + \varepsilon_{\rm for} \leq 10^{-10} $.}
    \label{fig:preshared-tripartite}
\end{figure}

The results of our numerical optimizations for the three protocols are presented in Figs.~\ref{fig:signature-tripartite} and \ref{fig:preshared-tripartite}, for documents of $b_M$ bits, with: $b_M \in [10^2,10^{10}]$.

In Fig.~\ref{fig:signature-tripartite} we plot the signature length ($\ell_S$) as a function of the document size ($b_M$). We observe that Protocol~\ref{QDSprot-Chen} requires much shorter signatures compared to the other two protocols. Moreover, while Protocol~\ref{QDSprot-Chen} and Protocol~\ref{QDSprot-INDRA} require signatures with length scaling linearly in $\log_2 b_M$ (though with very different slopes), Protocol~\ref{QDSprot-Andersson} requires large, but constant, signature lengths.

In Fig.~\ref{fig:preshared-tripartite} we plot the number of preshared key bits ($\ell_P$) as a function of the document size ($b_M$). We observe that all three protocols require a number of preshared bits that scales sub-linearly with respect to $b_M$, hence becoming more efficient as the size of the document grows. However, there are important gaps in absolute magnitude between the protocols and Protocol~\ref{QDSprot-Andersson} appears to be the least efficient. For instance, to sign a document of $b_M=10^6$ bits, Protocol~\ref{QDSprot-Andersson} requires $\ell_P \approx 1.1 \cdot 10^5$ preshared secret bits, while Protocol~\ref{QDSprot-INDRA} uses  $\ell_P \approx 10^4$ bits and Protocol~\ref{QDSprot-Chen} only uses $\ell_P = 441$ bits.

In order to explain the observations regarding Figs.~\ref{fig:signature-tripartite} and \ref{fig:preshared-tripartite}, we analyzed the optimal parameters returned by our optimizations. In particular, in the numerical optimizations for Protocol~\ref{QDSprot-Chen}, we obtain the following approximate optimal values for the optimization parameters $b_H$ and $b'_H$:
\begin{align}
    b_H &\approx \log_2 \left(\frac{b_M}{3.75 \cdot 10^{-11}}\right) +1 \label{optbH-Chen} \\
    b'_H &\approx \log_2 \left(\frac{b_M + 2 b_H}{6.25 \cdot 10^{-11}}\right) + 1  \label{optbHprime-Chen}.
\end{align}
By comparing these values with Table~\ref{tab:protocols_params}, we deduce that both $\ell_P$ and $\ell_S$ scale logarithmically with the document size, $\ell_P,\ell_S \sim \log_2 b_M$, with relatively small prefactors. This explains the superior performance of Protocol~\ref{QDSprot-Chen} both in terms of consumed preshared keys and signature lengths compared to the other two protocols.

A similar scaling is observed for the optimization parameters $b_H$ and $b'_H$ in Protocol~\ref{QDSprot-INDRA}, i.e., $b_H,b'_H \sim \log_2 b_M$, such that $\ell_P$ and $\ell_S$ also scale logarithmically with the document size. However, due to the large values acquired by the parameter $n \approx 50$, the resulting $\ell_P$ and $\ell_S$ of Protocol~\ref{QDSprot-INDRA} display large prefactors ($\approx 300$ for $\ell_P$ and $\approx 200$ for $\ell_S$) compared to Protocol~\ref{QDSprot-Chen}. This explains the steep increase in signature length compared to the other two protocols observed in Fig.~\ref{fig:signature-tripartite}.

In the numerical optimization for Protocol~\ref{QDSprot-Andersson}, we observe that:
\begin{align}
    b_H=2 \label{obs1}
\end{align}
is optimal for every tested document length, recovering the observation made in Ref.~\cite{USS-andersson}. Moreover, we observe that the optimal $b'_H$ is such that:
\begin{align}
    2 e^{-k/32} = \frac{ky + k\log_2(2k)}{2^{b'_H-1}}  
\end{align}
is satisfied, which implies:
\begin{align}
    b'_H = \log_2 (k e^{k/32}) + \log_2 \left(y + \log_2 (2k) \right). \label{obs2}
\end{align}
By making the ansatz that \eqref{obs1} and \eqref{obs2} hold, the optimization of Protocol~\ref{QDSprot-Andersson} runs over only one parameter, $k$, which is fixed by the constraint:
\begin{align}
    &\varepsilon_{\rm rep} + \varepsilon_{\rm for} =10^{-10} \nonumber\\
    \iff\quad &2 e^{-k/32} + \Xi\left(\frac{3}{4}k,k,\frac{1}{2}\right)=10^{-10} \nonumber\\
    \iff\quad &k \approx 759.
\end{align}
Recalling that $y\sim \log_2 b_M$ (see Appendix~\ref{app:efficient-hashing}), we deduce that $\ell_P$ scales logarithmically with the document size albeit with a very large prefactor: $\ell_P \sim 3 \times 759 \log_2 b_M$, thus explaining the worst performance among the three protocols as noted in Fig.~\ref{fig:preshared-tripartite}. Conversely, the signature length is constant for any document size: $\ell_S \approx 6072$ as observed in Fig.~\ref{fig:signature-tripartite}.

In summary, from our performance analysis in the tripartite scenario, we conclude that Protocol~\ref{QDSprot-Chen} \cite{QDS-Chen} is the most efficient protocol in terms of consumed preshared bits, with a consumption up to three orders of magnitude smaller than the other two protocols and in the range $\ell_P \in [10^2,10^3]$ for document sizes up to $b_M=10^{10}$ bits. At the same time, Protocol~\ref{QDSprot-Chen} is the one that generates signatures with the shortest lengths, scaling with the logarithm of the size of the document and beating the other protocols by more than one order of magnitude.

\section{Conclusion} \label{sec:conclusion}

In this work we investigated three practical quantum digital signature protocols from Refs.~\cite{QDS-Chen,QDS-INDRA,USS-andersson}, capable of signing large documents with relatively small signatures, while only requiring previously-established secret keys (e.g. through quantum key distribution). The secret keys are primarily used to agree on hash functions from universal families and to establish secret channels protected by one-time pad.

We carefully reviewed the security of each protocol (and in particular the use of authenticated communication) and made modifications where deemed necessary, in order to prove their information-theoretic security and avoid potential loopholes.

We then numerically optimized each protocol in the tripartite scenario to reduce the consumption of preshared secret bits as well as the signature lengths, for a fixed security threshold. We found that the QDS protocol from Ref.~\cite{QDS-Chen} is the most efficient among the three protocols in terms of consumed preshared bits and signature length, both scaling with the logarithm of the size of the document and decreasing by more than one order of magnitude compared to the other protocols.

\bibliography{bibliography}

\clearpage
\onecolumngrid

\appendix

\section{Wegman-Carter authentication} \label{app:WC}

Hash functions map larger domains to smaller sets such that, with high probability, if the hashed values of two quantities are equal, then the two quantities are also equal. Authenticated classical channels with information-theoretic (IT) security can be built from $\varepsilon$-ASU$_2$ hash families (Definition~\ref{def:almost-strongly-universal2}) through the Wegman-Carter (WC) authentication method \cite{WC} with key recycling. In this Appendix, we first illustrate the WC method with key recycling and then prove its IT security.\\

\begin{protocol}  \label{WCprot} \caption{WC authentication \cite{WC}}
\begin{enumerate}[wide, labelwidth=!, labelindent=0pt]
\item Let $M$ be the set of possible messages and $B$ the set of hashes, or tags. Let $\mathcal{F}$ be a publicly-known $\varepsilon$-ASU$_2$ family of hash functions from $M$ to $B$. 

\item Alice and Bob agree on the total number of messages $n$ they want to authenticate and they agree on a uniformly random secret key $(k_0, (k_1, \dots, k_n))$. The subkey $k_0$ identifies a unique hash function $f_{k_0} \in \mathcal{F}$ used to generate the tags of each of the $n$ authenticated messages.

\item For $i \in \{1,2,\dots,n\}$, do the following:
\begin{enumerate}
    \item Alice chooses a message $m_i \in M$ to be sent and generates the corresponding tag $t_i$ as: $t_i=f_{k_0} (m_i) \oplus k_i$. She sends the message $m_i$, the tag $t_i$ and the index $i$ to Bob.

    \item Bob uses the subkey $k_i$, corresponding to the received index from Alice, to extract the hash from the received tag by computing: $h_i = t_i \oplus k_i$. Bob then compares $h_i$ with the hash resulting from applying the selected hash function on the received message: $f_{k_0}(m_i)$. If they match, Bob authenticates Alice's message $m_i$.
\end{enumerate} 
\end{enumerate}
\end{protocol}

\vspace{1em}

Here we prove the IT security of the WC authentication method implemented with an $\varepsilon$-ASU$_2$ family.

\begin{theorem} \label{th:WC-security}
    Let $(k_0, (k_1, \dots, k_n))$ be a randomly chosen key known to Alice and Bob and let $m_1,\dots,m_n$ be $n$ messages sent by Alice to Bob via the WC authentication method. Suppose the attacker, Eve, knows the $\varepsilon$-ASU$_2$ family $\mathcal{F}$, the set of messages $M$, the messages $m_1,\dots,m_n$, their tags $t_1,\dots, t_n$ and indexes. Then, there is no forged message $m_E$ (for any index $i$) for which Eve can guess the correct tag $t_E$ with a probability larger than $\varepsilon$.
\end{theorem}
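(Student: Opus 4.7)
The plan is to reduce the forgery-detection problem to a statement about $\varepsilon$-AXU$_2$ hashing, after first showing that the transcript Eve observes leaks no information about the main hash-selecting key $k_0$. The crucial ingredient is that each tag is a fresh one-time-pad encryption of the underlying hash, so intuitively Eve's posterior over $k_0$ equals her prior; only then does the ASU$_2$ property translate into a bound on forgery success.

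First, I would formalize the no-leakage statement. Because $k_0$ and $(k_1,\dots,k_n)$ are sampled uniformly and independently, the joint probability that Eve's view is $(t_1,\dots,t_n)$ conditioned on $k_0$ factorizes, since each $t_i = f_{k_0}(m_i)\oplus k_i$ is a uniform random variable over $B$ once $k_i$ is averaged out. Concretely, $\Pr[t_1,\dots,t_n \mid k_0] = |B|^{-n}$ does not depend on $k_0$, so by Bayes' rule $\Pr[k_0 \mid t_1,\dots,t_n] = \Pr[k_0] = 1/|\mathcal{F}|$. Hence, from Eve's standpoint the hash function $f_{k_0}$ remains uniformly distributed on the $\varepsilon$-ASU$_2$ family.

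Second, I would invoke (and briefly justify) that an $\varepsilon$-ASU$_2$ family is also $\varepsilon$-AXU$_2$: for $m_1 \neq m_2$ and any $b$, summing \eqref{eq-almost-strongly-universal2} over $b_1$ with $b_2 = b_1 \oplus b$ gives $\Pr_f[f(m_1)\oplus f(m_2)=b] \leq \varepsilon$. Then I would split into two cases on the index $i$ at which Eve forges. Case (a): $i$ was used by Alice to send $(m_i,t_i)$. For Bob to accept $(m_E,t_E)$ with $m_E\neq m_i$, it must be that $t_E \oplus k_i = f_{k_0}(m_E)$. Combining this with $t_i\oplus k_i = f_{k_0}(m_i)$ eliminates the fresh subkey and yields the condition $f_{k_0}(m_E) \oplus f_{k_0}(m_i) = t_E \oplus t_i$, a specific value Eve must hit. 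By the AXU$_2$ bound and the uniformity of $k_0$ established above, this happens with probability at most $\varepsilon$. Case (b): $i$ is an unused index. Then $k_i$ is independent of Eve's view, so $t_E$ is accepted iff it equals the uniformly random string $f_{k_0}(m_E)\oplus k_i$, which happens with probability $2^{-b_H}\leq \varepsilon$ (the latter inequality following from the ASU$_2$ axiom applied by summing over $b_1,b_2$).

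The main obstacle I anticipate is the first step: making the no-leakage claim fully rigorous in the presence of an adaptive Eve who might choose $m_E$ as a function of past tags. This is handled by noting that Eve's choice of $(m_E,t_E,i)$ is a deterministic function of the transcript (and her randomness), so we can condition on that choice; the argument then reduces to bounding, over the posterior of $k_0$, the probability that the fixed target XOR $t_E \oplus t_i$ matches $f_{k_0}(m_E)\oplus f_{k_0}(m_i)$, which is exactly the AXU$_2$ bound. Once this conditioning is done cleanly, the two-case analysis above closes the proof, yielding a forgery probability of at most $\varepsilon$ regardless of the index Eve targets.
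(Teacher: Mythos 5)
Your proof is correct, and it reaches the paper's bound by a genuinely different decomposition. The paper argues directly in terms of key counting: it characterizes Eve's optimal deterministic guess as $t_{\rm guess}=\arg\max_t|S(t)|$, where $S(t)$ is the set of keys $(k_0,(k_1,\dots,k_n))$ consistent with both the observed transcript and the forgery, shows that exactly $|\mathcal{F}|$ keys are consistent with the transcript alone (one for each $k_0$), and bounds $|S(t)|\le\varepsilon|\mathcal{F}|$ by applying the joint ASU$_2$ condition \eqref{eq-almost-strongly-universal2} to the pair of events $f(m_E)=t\oplus k_1$ and $f(m_1)=t_1\oplus k_1$ for each fixed $k_1$. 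You instead make the "posterior on $k_0$ is uniform" step an explicit Bayes argument (which is implicitly the paper's observation that the denominator is $|\mathcal{F}|$), then XOR the acceptance equation against the observed equation to eliminate $k_i$ entirely, reducing the forgery event to the single condition $f_{k_0}(m_E)\oplus f_{k_0}(m_i)=t_E\oplus t_i$ and invoking the derived AXU$_2$ property. The two routes are combinatorially equivalent, but yours is more modular (it cleanly separates the one-time-pad argument from the hashing argument, and makes the handling of an adaptive Eve explicit by conditioning on the transcript), while the paper's counting argument stays closer to the raw definition of the ASU$_2$ family and exhibits the optimal attack explicitly. You also cover the case of a forgery at an unused index, with the correct observation that $2^{-b_H}\le\varepsilon$ follows from summing \eqref{eq-almost-strongly-universal2} over $b_1,b_2$; the paper restricts attention without loss of generality to replacing $m_1$ and does not treat this case separately.
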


\begin{proof}
    Suppose, without loss of generality, that Eve wants to replace the first message $m_1$ with a forged message $m_E$. Then, the correct tag such that Bob authenticates the forged message would be: $t_E=f_{k_0}(m_E) \oplus k_1$. Note that Eve does not know the preshared key $(k_0, (k_1, \dots, k_n))$, hence she needs a strategy to correctly guess $t_E$. An optimal strategy consists in fixing a value for the tag, $t$, and consider all the possible keys $(k_0, (k_1, \dots, k_n))$ for which the observed message-tag pairs $(m_i,t_i)$ and the forged pair $(m_E,t)$ are successfully authenticated. Let us denote this set $S(t)$:
    \begin{align}
        S(t) = \left\lbrace (k_0, (k_1, \dots, k_n)) : \, t=f_{k_0}(m_E) \oplus k_1, \, t_1=f_{k_0}(m_1)\oplus k_1, \, t_i=f_{k_0}(m_i)\oplus k_i \, \mbox{for } i=2,\dots,n \right\rbrace.
    \end{align}
    Then, Eve's guess of the tag corresponds to the value for which there are most key combinations allowed, i.e., 
    \begin{align}
        t_{\rm guess} := \arg \max_{t \in B} \abs{S(t)},
    \end{align}
    where $\abs{S}$ is the cardinality of $S$. Indeed, this guess maximizes Eve's probability of guessing the correct tag. In particular, given messages $m_1,\dots,m_n$, tags $t_1,\dots, t_n$ and a forged message $m_E$, Eve's guess is deterministic and given by $t_{\rm guess}$. The probability that $ t_{\rm guess} $ is correct corresponds to the probability that the key $(k_0, (k_1, \dots, k_n))$, randomly chosen by Alice and Bob, is such that $t_E=f_{k_0}(m_E) \oplus k_1 = t_{\rm guess}$, conditioned on the compatibility with the observed message-tag pairs. Then, we have that Eve's guessing probability is given by:
    \begin{align}
        p_{\rm guess} = \Pr_{(k_0, (k_1, \dots, k_n))}[t_E = t_{\rm guess}|t_i=f_{k_0}(m_i)\oplus k_i \, \mbox{for } i=1,\dots,n]
        &= \frac{\abs{S(t_{\rm guess})}}{\abs{\mathcal{F}}}, \label{Eveguessprob}
    \end{align}
    where we used the fact that the keys are chosen randomly by Alice and Bob and that their total number (given that  $(m_i,t_i)$ are authenticated) is
    \begin{align}
        \abs{\left\lbrace (k_0, (k_1, \dots, k_n)) : \, t_i=f_{k_0}(m_i)\oplus k_i \, \mbox{for } i=1,\dots,n \right\rbrace} = \abs{\mathcal{F}}.
    \end{align}
    As a matter of fact, once we fix $k_0$, then all other keys are fixed by the equations $k_i = t_i \oplus f_{k_0} (m_i)$.
    
    In the following, we compute Eve's guessing probability starting from the cardinality of $S(t)$, $\abs{S(t)}$. For a fixed $k_1$, there are at most $\varepsilon\abs{\mathcal{F}}/\abs{B}$ possible choices of $f \in\mathcal{F}$ such that $f(m_E)=t \oplus k_1 \wedge f(m_1)=t_1 \oplus k_1$ due to the property \eqref{eq-almost-strongly-universal2} of an $\varepsilon$-ASU$_2$ family. Therefore, there are at most $\varepsilon\abs{\mathcal{F}}/\abs{B}$ possible values of $k_0$ for each given $k_1$. Now, for each given value of $k_0$, there is a unique string of keys $k_2,\dots,k_n$ such that the remaining conditions of the set $S(t)$ are satisfied; these keys are fixed by: $k_i = t_i \oplus f_{k_0}(m_i)$. Thus, summing up, there are at most  $\abs{B} \cdot \varepsilon \abs{\mathcal{F}}/\abs{B}$ keys in $S(t)$:
    \begin{align}
        \abs{S(t)} \leq  \varepsilon \abs{\mathcal{F}} = \abs{S(t_{\rm guess})}, \label{proof-eq-1}
    \end{align}
    where we used that $t=t_{\rm guess}$ maximizes the size of $S(t)$.
    By employing \eqref{proof-eq-1} in \eqref{Eveguessprob}, we obtain Eve's guessing probability
    \begin{align}
        p_{\rm guess} = \varepsilon,
    \end{align}
    which concludes the proof.
\end{proof}

\section{Efficient hashing families} \label{app:efficient-hashing}

In this Appendix we specify the $\varepsilon$-ASU$_2$ family and the $\varepsilon$-AXU$_2$ family adopted in the analyzed QDS protocols. We follow the notation laid out in Sec.~\ref{sec:notation} such that each hash function maps strings of $b_M$ bits to strings of $b_H$ bits.

The chosen families are selected for their efficiency in the number of preshared key bits required to uniquely identify a function of the family. In particular, the number of preshared bits scales with $\log_2 b_M$ (for a fixed security parameter $\varepsilon$), rather than with $b_M$ as for strongly-universal$_2$ sets based on random matrices.\\

\textbf{$\varepsilon$-ASU$_2$ family:} \quad The $\varepsilon$-ASU$_2$ family employed in this manuscript is $\mathcal{F}_{\rm ASU}$, with $\varepsilon=2^{1-b_H}$. The explicit construction of $\mathcal{F}_{\rm ASU}$ is provided in Refs.~\cite{ASU-Andersson,QDS-Kiktenko}. Each function of the set $\mathcal{F}_{\rm ASU}$ is uniquely defined by a string of $y$ bits, where $y$ satisfies: $y=3 b_H + 2 \sigma$, where $\sigma$ is the smallest number that satisfies: $b_M \leq (b_H + \sigma)(1+ 2^\sigma)$. Since $\sigma$ is defined by a transcendental inequality, it can only be computed numerically. Alternatively, it might be convenient to obtain a relatively tight analytical upper bound on $\sigma$ by solving: $b_M = b_H (1+ 2^\sigma)$, which yields
\begin{align}
    \bar{\sigma} = \log_2 \left(\frac{b_M}{b_H} -1 \right).
\end{align}
This provides us with the following upper bound on the number of bits $y$ required to specify an element in $\mathcal{F}_{\rm ASU}$,
\begin{align}
    \bar{y} = \left\lceil 3 b_H + 2  \log_2 \left(\frac{b_M}{b_H} -1\right) \right\rceil,
\end{align}
and this is the value that we use in our performance analysis of Sec.~\ref{sec:efficiency-comp}.\\

\textbf{$\varepsilon$-AXU$_2$ family:} \quad The $\varepsilon$-AXU$_2$ family employed in this manuscript is $\mathcal{F}_{\rm AXU}$, with $\varepsilon=b_M 2^{1-b_H}$, and it is composed of Toeplitz matrices such that consecutive columns are consecutive states of a linear feedback shift register (LFSR) of length $b_H$ \cite{LFSR-hashing}. Therefore, each hash function, i.e. each Toeplitz matrix, is specified by an LFSR and its initial state, totaling to $2b_H$ bits.

In order to specify the elements of $\mathcal{F}_{\rm AXU}$, we first define LFSRs.

\begin{definition}
    Let $p(x)$ be a polynomial of degree $n$ over GF(2), $p(x)=x^n + p_{n-1} x^{n-1} + \dots p_1 x + p_0$ and $s^0=(s_{n-1},\dots,s_1,s_0)$ a binary string. A linear feedback shift register (LFSR) of length $n$ is specified by $p(x)$ and the initial state of the register, $s^0$. The following state of the register, $s^1$, is obtained by shifting to the right the bits of the previous state, $s^0$, and by computing the new element: $s_n= s^0 \cdot p \mod 2$, where $p$ is the vector of coefficients $p=(p_{n-1},\dots,p_0)$ and where ''$\cdot$'' indicates the scalar product. Thus we have: $s^1=(s_n,s_{n-1},\dots,s_1)$. By applying the same rule recursively, the $k$-th state of the LFSR is given by: $s^k=(s_{n-1+k},\dots,s_{1+k},s_k)$, where a generic element of the sequence of bits generated by the LFSR is: $s_l=s^{l-n} \cdot p \mod 2$, for $l \geq n$.
\end{definition}

From the above definition, we deduce that the transpose of consecutive states of an LFSR can be considered as consecutive columns of a Toeplitz matrix. We now define the elements of $\mathcal{F}_{\rm AXU}$.

\begin{definition}
    Let $p(x)$ be an irreducible polynomial of degree $b_H$ over GF(2) and let $s^0=(s_{b_H -1}, \dots, s_1,s_0)^T$ be the initial state of the LFSR with connection polynomial $p(x)$, with $s^0 \neq 0$. Let $T_{p,s}$ be the Toeplitz matrix with columns given by: $T_{p,s}=(s^0,s^1,\dots,s^{b_M-1})$, where $s^k$ is the transpose of the $k$-th state of the LFSR. The hash function $f_{p,s} \in \mathcal{F}_{\rm AXU}$ maps messages $m=(m_0,\dots,m_{b_M-1})^T$ of variable length up to $b_M$ bits to $b_H$-bit strings given by: $f_{p,s}(m)= T_{p,s} m \mod 2$. In other words, the $j$-th element of the hash reads: $(f_{p,s}(m))_j=\bigoplus_{i=0}^{b_M-1} m_i \, s_{b_H-j+i}$, where $s_{b_H-j+i}$ is the $(b_H-j+i)$-th element of the LFSR sequence. 
\end{definition}

Note that to avoid allowing an adversary to add zeroes to the message without changing the tag, each message needs to end with a $1$.

\section{Security proof of Protocol~\ref{QDSprot-Andersson}} \label{app:security-Andersson}

In this Appendix we prove the IT security of the QDS protocol presented in Sec.~\ref{sec:QDS-Andersson}.

\begin{lemma*} 
    Protocol~\ref{QDSprot-Andersson} is $\varepsilon_{\rm transf}$-secure against non-transferability according to Definition~\ref{def:transferability-Andersson}, with:
    \begin{align}
        \varepsilon_{\rm transf}= \binom{N(1-d_R)}{2} \max\{\varepsilon_{i,j},\varepsilon_{\rm auth},\varepsilon_{\rm hyb}\},
    \end{align}
    where
    \begin{align}
        \varepsilon_{i,j} &= N(1-d_R) \exp\left(- \frac{k}{8(l_{max}+1)^2}\right) \\
        \varepsilon_{\rm auth} &=\left(\frac{ky + k\log_2(Nk)}{2^{b'_H-1}}\right)^{N[1/2 -(l_{max}+1)d_R]}  \\
        \varepsilon_{\rm hyb} &=  \frac{ky + k\log_2(Nk)}{2^{b'_H-1}} \,\,\Xi\left(\left\lfloor\frac{N}{2}\right\rfloor +1, N(1-d_R) ,\exp\left(- \frac{k}{8(l_{max}+1)^2}\right) \right),
    \end{align}
    and
    \begin{align}
        \Xi(k,n,p):= \sum_{j=k}^{n} \binom{n}{j} p^j (1-p)^{n-j}. \label{Xi-Andersson-Appendix}
    \end{align}
\end{lemma*}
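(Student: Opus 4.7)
The plan is to first reduce the problem to bounding the probability of disagreement between a single fixed pair of honest receivers $P_i, P_j \notin C$ and then apply a union bound over the $\binom{N(1-d_R)}{2}$ such pairs. By the cascade property \eqref{verfication-cascade} it suffices to take $l' = l-1$ in \eqref{nontrasf-condition}, so for each pair I have to bound the probability of the event $\mathcal{E}_{i,j}^{l} = \{ \mathrm{Ver}_{i,l}=\mathrm{True} \,\wedge\, \mathrm{Ver}_{j,l-1}=\mathrm{False}\}$ for some $l \geq 1$, uniformly in the adversarial choice of $\{Doc,Sig\}$ by the coalition $C$.

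For a fixed honest pair, I would then partition $\mathcal{E}_{i,j}^{l}$ into three regimes according to what the coalition manages to do to the authenticated secret channels used in step 1.3: (i) no successful attack on any of those channels; (ii) enough successful attacks to by themselves flip $P_j$'s verification outcome; (iii) a single successful attack combined with a fortunate fluctuation of the residual randomness in $P_j$'s verification. In regime (i), the coalition's knowledge of an honest intermediary $P_m$'s holdings is confined to the initial pool of $Nk$ functions that the sender assigned in step 1.2; the random partition of this pool into the $N$ disjoint $k$-blocks $R_{m \to 1}, \dots, R_{m \to N}$ remains uniform and independent across $m$. Consequently, the per-intermediary error counts seen by $P_i$ and $P_j$ are disjoint random $k$-subsamples of a common parent, and a Hoeffding-type inequality for sampling without replacement bounds the probability that the two subsample means straddle the verification gap $\Delta s = 1/[2(l_{max}+1)]$ by $\exp(-k/[8(l_{max}+1)^2])$. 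Summing the tail contributions over the $N(1-d_R)$ honest intermediaries and matching to the threshold gap $N(\delta_l - \delta_{l-1}) = N d_R$ yields the factor $\varepsilon_{i,j}$.

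In regime (ii), in order to move $P_j$'s count of passed tests below $N\delta_{l-1}$ using only channel attacks, the coalition must successfully tamper with at least $N[1/2 - (l_{max}+1)d_R]$ distinct $(P_m \to P_j)$ messages, since each such corruption only gives the coalition control over the $k$ functions in one block $R_{m \to j}$ and, across all $l$, the worst-case threshold gap to overcome is $1-\delta_{l_{max}} = 1/2 - (l_{max}+1)d_R$. By Theorem~\ref{th:WC-security} applied with message length $ky + k\log_2(Nk)$ and tag length $b'_H$, each such attack succeeds independently with probability at most $(ky + k\log_2(Nk))/2^{b'_H-1}$, which raised to the required number of successes gives $\varepsilon_{\rm auth}$. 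Regime (iii) is the hybrid: a single authentication attack succeeds (probability $(ky + k\log_2(Nk))/2^{b'_H-1}$) and the remaining honest intermediaries, whose subsampling is still uniform, by chance produce at least $\lfloor N/2 \rfloor + 1$ block-level disagreements out of $N(1-d_R)$ trials each of probability $\exp(-k/[8(l_{max}+1)^2])$; this factorizes as $\varepsilon_{\rm hyb}$ written in terms of the binomial-tail $\Xi$ defined in \eqref{Xi-Andersson-Appendix}. Taking the maximum of the three regime bounds, multiplying by the union-bound factor $\binom{N(1-d_R)}{2}$, completes the derivation of $\varepsilon_{\rm transf}$.

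I expect the main technical obstacle to lie in regime (i): the coalition has arbitrary control over $Sig$, so the ``bad'' hash functions --- those whose hash of $Doc$ disagrees with the corresponding tag in $Sig$ --- can be an adversarially chosen subset of each honest intermediary's pool, and I need to argue that the induced subsample statistics are still sufficiently concentrated for an arbitrary such worst-case subset. This is exactly the setting where Hoeffding's inequality for sampling without replacement applies cleanly, but care is needed when combining the $N(1-d_R)$ independent per-intermediary concentration bounds with the global verification threshold and when verifying that the choice of equal spacing $\Delta s = 1/[2(l_{max}+1)]$ with $(l_{max}+1)\Delta s < 1/2$ produces precisely the constant $1/[8(l_{max}+1)^2]$ in the exponent. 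The repudiation bound then follows by noting that $\mathrm{MV}(Doc,Sig) = \mathrm{Invalid}$ implies at least $\lceil N/2 \rceil$ honest receivers disagreeing with the accepting one, which is strictly stronger than pairwise disagreement at level $-1$, and the same three-regime analysis applies verbatim, so the same bound governs $\varepsilon_{\rm rep}$.
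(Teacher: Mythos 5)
Your proposal follows essentially the same route as the paper's proof: the same reduction to $l'=l-1$ and a fixed honest pair, the same three-regime decomposition (invalid hash functions only, authentication attacks only, hybrid), the same deterministic $Nd_R$ offset from dishonest intermediaries cancelling the threshold gap $N(\delta_l-\delta_{l-1})$, the same midpoint-$p_e$ Hoeffding argument yielding $\exp\left(-k/[8(l_{max}+1)^2]\right)$, and the same final union bound over the $\binom{N(1-d_R)}{2}$ pairs. The only slips are cosmetic: the authenticated channels are implemented with the $\mathcal{F}_{\rm AXU}$ family (Table~\ref{tab:efficient-hashing}), not the $\varepsilon$-ASU$_2$ construction of Theorem~\ref{th:WC-security}, although the resulting bound $(ky+k\log_2(Nk))/2^{b'_H-1}$ is the same, and the binomial tail in $\varepsilon_{\rm hyb}$ is the probability that at least $\lfloor N/2\rfloor+1$ tests still \emph{pass} at $P_i$ under the degraded hash functions rather than a count of ``disagreements.''
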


\begin{proof}
According to the description of Protocol~\ref{QDSprot-Andersson}, if a signature is verified at level $l$, then it is also verified also at any lower level:
\begin{align}
    &\mathrm{Ver}_{j,l}(Doc,Sig)=\mathrm{True} \implies \mathrm{Ver}_{j,l'}(Doc,Sig)=\mathrm{True}, \,\, \forall \, l'<l.
\end{align}
By the contrapositive we have that:
\begin{align}
    \mathrm{Ver}_{j,l'}(Doc,Sig)=\mathrm{False} \implies \mathrm{Ver}_{j,l}(Doc,Sig)=\mathrm{False} \quad\forall\, l>l'.
\end{align}
Thus, we can restrict without loss of generality the condition in Definition~\ref{def:transferability-Andersson} to verifying that
\begin{align}
        \Pr&\left[\exists\, P_i,P_j \notin C : \,\mathrm{Ver}_{i,l}(Doc,Sig)=\mathrm{True} \, \wedge   \,\mathrm{Ver}_{j,l-1}(Doc,Sig)=\mathrm{False}\right] \leq \varepsilon_{\rm transf}, \label{transf-toprove}
\end{align}
where we consider the largest possible coalition $C$ of dishonest users, formed by the dishonest sender and $Nd_R$ dishonest receivers.

To start with, we fix the pair of honest receivers that are targeted by the coalition to be $P_i$ and $P_j$, respectively. Nevertheless, the attack is successful whenever any pair of honest receivers disagree on the verification outcome; we cover this case at the end of the proof. Moreover, we start by considering the possible attacks from the coalition $C$ that do not involve attacking the authenticated channels and are instead based on distributing invalid hash functions to honest receivers.

The probability of a successful attack of the coalition on the pair $P_i,P_j$ is given by: 
\begin{align}
    \Pr[\mbox{attack on }P_i,P_j] &= \Pr\left[\mathrm{Ver}_{i,l}(Doc,Sig)=\mathrm{True} \, \wedge   \,\mathrm{Ver}_{j,l-1}(Doc,Sig)=\mathrm{False}\right] \nonumber\\
    &=\Pr\left[\sum_{r=1}^N T^{Doc}_{i,r,l}>\frac{N}{2} +(l+1)Nd_R \, \wedge \, \sum_{r=1}^N T^{Doc}_{j,r,l-1} \leq \frac{N}{2} + l N d_R\right], \label{transf-condition-proof}
\end{align}
where we used the definitions of the verification function and of $\delta_l$. Now, we observe that for the tests $T^{Doc}_{j,r,l-1}$ where $r\in C$, the coalition can force the test to not pass. Indeed, a dishonest receiver $r$ can forward invalid hash functions to $P_j$ such that the number of discrepancies observed by $P_j$ exceeds the threshold $s_{l-1} k$, forcing $T^{Doc}_{j,r,l-1}=0$. At the same time, the coalition can behave honestly with respect to receiver $P_i$, making sure that their tests are passed: $T^{Doc}_{i,r,l}=1$. Since there are $Nd_R$ dishonest receivers, there can be at most $Nd_R$ tests that are passed by $P_i$ and not passed by $P_j$. Since this occurs deterministically and cannot be avoided, the condition to be checked in \eqref{transf-condition-proof} updates to:
\begin{align}
    \Pr[\mbox{attack on }P_i,P_j] &= \Pr\left[\sum_{h=1}^{N(1-d_R)} T^{Doc}_{i,h,l}>\frac{N}{2} + lNd_R \, \wedge \, \sum_{h=1}^{N(1-d_R)} T^{Doc}_{j,h,l-1} \leq \frac{N}{2} + l N d_R\right], \label{transf-condition-proof2}
\end{align}
where now the sums only run over the set of honest receivers, $h \notin C$. At this point, the coalition's ability to steer the result of a test where the hash functions originate from an honest receiver, $h$, is reduced, but not null. In particular, to each honest receiver $h$, the dishonest sender can provide a set of $Nk$ hash functions, some of which are correct while others are invalid (in the sense that their hashed values are different from the tags). Then, depending on the random samples of $k$ hash functions drawn by $h$ and sent to participants $P_i$ and $P_j$, the respective tests $T^{Doc}_{i,h,l}$ and $T^{Doc}_{j,h,l-1}$ might fail or pass.

In order to find the optimal strategy for the coalition, we consider a necessary condition for the event in \eqref{transf-condition-proof2} to occur. Namely, that there exists at least one honest receiver, $\bar{h}$, such that $P_i$'s test is passed while $P_j$'s test fails. Note that this condition may also be sufficient since the event in \eqref{transf-condition-proof2} occurs as soon as the number of passed tests between $P_i$ and $P_j$ differs by one. Since the probability of a necessary condition is larger than the probability of the original event, we have the upper bound:
\begin{align}
    \Pr[\mbox{attack on }P_i,P_j] &\leq \Pr\left[\exists\,\bar{h}\notin C:\,T^{Doc}_{i,\bar{h},l} =1 \, \wedge\, T^{Doc}_{j,\bar{h},l-1} =0 \right] \nonumber\\
    &\leq N(1-d_R) \Pr\left[T^{Doc}_{i,h,l} =1 \, \wedge\, T^{Doc}_{j,h,l-1} =0 \right] \nonumber\\
    &\leq N(1-d_R) \min\left\lbrace\Pr\left[T^{Doc}_{i,h,l} =1  \right] , \Pr\left[T^{Doc}_{j,h,l-1} =0 \right] \right\rbrace, \label{transf-condition-proof3}
\end{align}
where in the second inequality we used the union bound and assumed without loss of generality that the probability of mismatching outcomes for participants $P_i$ and $P_j$ is independent of the choice of honest receiver $h$.

As mentioned earlier, the coalition can provide participant $h$ with a set of hash functions where some of the functions are invalid, meaning that their hashed values would differ from the tags sent by the sender. Regardless of the strategy, since participant $h$ randomly samples the sets $F_{h \to i}$ and $F_{h \to j}$ to be sent to $P_i$ and $P_j$, respectively, the expected number of mismatches between the tags and the hashed values observed by $P_i$ and by $P_j$ coincides. We define the expected fraction of mismatches observed by $P_i$ and $P_j$ to be $p_e$:
\begin{align}
    p_e &= \mbox{E}\left[\sum_{r\in R_{h \to i}} \frac{g(t_r,f_r(Doc))}{k}\right] \\
    &= \mbox{E}\left[\sum_{r\in R_{h \to j}} \frac{g(t_r,f_r(Doc))}{k}\right],
\end{align}
and we assume it to be fixed by the optimal strategy found by the coalition (in other words, $p_e$ is not a random variable). Now, we consider different choices for the value of $p_e$ and compute the bound in \eqref{transf-condition-proof3} for each choice.

\begin{itemize}
    \item $p_e < s_l < s_{l-1}$ \quad In this case, the expected fraction of errors is below the thresholds used in both tests. Therefore, it is likely that both tests are passed. This implies that we can trivially bound the probability of passing the test in $P_i$:
    \begin{align}
        \Pr\left[T^{Doc}_{i,h,l} =1  \right] \leq 1. \label{b1}
    \end{align}
    On the contrary, we can use Hoeffding's inequality to bound the probability that the test in $P_j$ fails:
    \begin{align}
        \Pr\left[T^{Doc}_{j,h,l-1} =0 \right] &= \Pr\left[\sum_{r\in R_{h \to j}} g(t_r,f_r(Doc))  \geq s_{l-1} k \right] \nonumber\\
        &= \Pr\left[\sum_{r\in R_{h \to j}} g(t_r,f_r(Doc)) -p_e k \geq (s_{l-1} -p_e) k \right] \nonumber\\
        &\leq \exp\left(-2 k (s_{l-1} -p_e)^2\right). \label{b2}
    \end{align}
    By combining the bounds in \eqref{b1} and \eqref{b2} and the condition on $p_e$, we obtain the following upper bound on the success probability of the attack from \eqref{transf-condition-proof3}:
    \begin{align}
    \Pr[\mbox{attack on }P_i,P_j] &\leq N(1-d_R) \exp\left(-2 k (s_{l-1} -s_l)^2\right). \label{case1}
    \end{align}
    
    \item $p_e > s_{l-1} > s_l$ \quad In this case, the expected fraction of errors exceeds both thresholds, thus likely causing both tests to fail. We thus obtain the following bounds:
    \begin{align}
        \Pr\left[T^{Doc}_{j,h,l-1} =0 \right] &\leq 1, \\
        \Pr\left[T^{Doc}_{i,h,l} =1  \right] &= \Pr\left[\sum_{r\in R_{h \to i}} g(t_r,f_r(Doc)) < s_{l} k \right] \nonumber\\
        &= \Pr\left[p_e k - \sum_{r\in R_{h \to i}} g(t_r,f_r(Doc)) > (p_e - s_{l} )k \right] \nonumber\\
        &\leq  \exp\left(-2 k (p_e -s_l)^2\right),
    \end{align}
    where the second bound is again obtained by applying Hoeffding's inequality. By combining the above bounds and the condition on $p_e$, we obtain the following upper bound on the success probability of the attack:
    \begin{align}
    \Pr[\mbox{attack on }P_i,P_j] &\leq N(1-d_R) \exp\left(-2 k (s_{l-1} -s_l)^2\right). \label{case2}
    \end{align}

    \item $s_l < p_e < s_{l-1}$ \quad In this case, it is likely that the test in $P_i$ fails while the test in $P_j$ succeeds. Thus, we can use Hoeffding's inequality to bound both probabilities in \eqref{transf-condition-proof3}:
    \begin{align}
        \Pr\left[T^{Doc}_{i,h,l} =1  \right] &\leq \exp\left(-2 k (p_e -s_l)^2\right) \\
        \Pr\left[T^{Doc}_{j,h,l-1} =0 \right] &\leq \exp\left(-2 k (s_{l-1} -p_e)^2\right).
    \end{align}
    By combining the above bounds in \eqref{transf-condition-proof3}, we obtain the following upper bound on the attack success probability:
    \begin{align}
    \Pr[\mbox{attack on }P_i,P_j] &\leq N(1-d_R) \min\left\lbrace \exp\left(-2 k (p_e -s_l)^2\right), \exp\left(-2 k (s_{l-1} -p_e)^2\right) \right\rbrace .
    \end{align}
    Since the optimal strategy by the coalition aims at maximizing the attack success probability, we choose $p_e$ to be equidistant from $s_l$ and $s_{l-1}$, i.e. $p_e=(s_l + s_{l-1})/2$. Indeed, this choice maximizes the right hand side of the last expression. We obtain:
    \begin{align}
    \Pr[\mbox{attack on }P_i,P_j] &\leq N(1-d_R) \exp\left(- \frac{k}{2} (s_{l-1} -s_l)^2\right) . \label{case3}
    \end{align} 
\end{itemize}

By comparing the bounds \eqref{case1}, \eqref{case2}, and \eqref{case3} obtained for various possible choices of the value of $p_e$, we observe that the largest attack probability on a given pair of participants $P_i,P_j$ is obtained in \eqref{case3} for $p_e=(s_l + s_{l-1})/2$. Thus, we obtain the following upper bound on the probability of a successful attack on the fixed pair $P_i,P_j$ of honest participants, given that the coalition $C$ performs the attack with invalid hash functions provided to honest receivers:
\begin{align}
    \Pr[\mbox{attack on }P_i,P_j] &\leq \varepsilon_{i,j}, \label{attackPiPj-noauth}
\end{align}
where we employed the optimal choice for the spacing of the $s_l$ variables, $s_{l-1}-s_l=\Delta s\approx [2(l_{max}+1)]^{-1}$,
\begin{align}
    \varepsilon_{i,j} = N(1-d_R) \exp\left(- \frac{k}{8(l_{max}+1)^2}\right). \label{attackPiPj-noauth-prob}
\end{align}

At this point, we turn to consider the attack on the transferability property between $P_i$ and $P_j$ which is enabled by imperfect authenticated channels, while assuming that all the hash functions distributed by the sender are correct. Specifically, the coalition can either attack the authenticated channels used to forward the $\{Doc,Sig\}$ pair in the messaging stage, or the authenticated secret channels used to shuffle the hash functions in the distribution stage. For the former, a successful attack would indeed cause a rejection but it comes at the cost of having one of the participants verifying a different pair $\{Doc',Sig'\}$. Since the transferability definition requires both parties to verify the same document-signature pair, this scenario is not relevant for the proof (although  causing a transferability issue in practice). In the latter attack, the coalition attempts to alter the hash functions sent to party $P_j$ over the authenticated secret channels by honest receivers, such that $P_j$ rejects the signature. Meanwhile, the hash functions destined to $P_i$ are left untouched, hence the event $\mathrm{Ver}_{i,l}(Doc,Sig)=\mathrm{True}$ occurs with certainty. The probability that the described attack is successful is given by:
\begin{align}
    \Pr\left[\mbox{attack on }P_j\right] &= \Pr\left[\mathrm{Ver}_{j,l-1}(Doc,Sig)=\mathrm{False}\right] \nonumber\\
    &=\Pr\left[\sum_{h=1}^{N(1-d_R)-1} T^{Doc}_{j,h,l-1} \leq \frac{N}{2} + l N d_R-1\right],  \label{attackPj-auth}
\end{align}
where we already accounted for the fact that the $Nd_R$ dishonest receivers will send invalid hash functions to $P_j$ and that the hash functions kept by $P_j$, i.e. those in the set $F_{j \to j}$, are necessarily correct, thus causing $T_{j,j,l-1}^{Doc}=1$. 

From \eqref{attackPj-auth}, we deduce that $P_j$ rejects the signature if at least $N(1-d_R)-1 - (N/2 + l N d_R-1)= N[1/2 -(l+1)d_R]$ tests are not passed. This implies that the coalition must successfully corrupt the hash functions sets $F_{h \to j}$ sent by an equal number of honest receivers over authenticated secret channels. Now, recall from Sec.~\ref{sec:notation} that the authenticated channels are implemented via WC authentication with key recycling and with the hash family $\mathcal{F}_{\rm AXU}$. Then, the probability of corrupting one authenticated message of length $b_M$ is $b_M/2^{b'_H-1}$ according to Table~\ref{tab:efficient-hashing}. Note, however, that a single failed authentication (caused by a corruption attempt) causes the protocol to abort. Then, the probability that the coalition successfully corrupts enough hash function sets to cause $P_j$'s rejection without causing an abortion reads:
\begin{align}
    \Pr\left[\mbox{attack on }P_j\right] &= \left(\frac{ky + k\log_2(Nk)}{2^{b'_H-1}}\right)^{N[1/2 -(l+1)d_R]} \nonumber\\
    &\leq \varepsilon_{\rm auth}, 
\end{align}
where $ky + k\log_2(Nk)$ is the length of the messages exchanged by the receivers in the distribution stage and where
\begin{align}
    \varepsilon_{\rm auth} :=  \left(\frac{ky + k\log_2(Nk)}{2^{b'_H-1}}\right)^{N[1/2 -(l_{max}+1)d_R]} \label{attackPj-auth-prob}.
\end{align}

Now, for each given pair $P_i,P_j$, the coalition might decide to perform the first attack we analyzed, whose probability of success is given by \eqref{attackPiPj-noauth-prob}, or the second attack with probability of success \eqref{attackPj-auth-prob}, or a combination of the two. Combining the two attacks means that the verification tests at $P_j$ may fail either due to incorrect hash functions received from honest receivers, or due to hash functions sent by honest receivers which are corrupted on their way to $P_j$. The necessary events for the hybrid attack to take place are that $P_i$ successfully verifies the signature and that at least one successful attack is performed on the authenticated channels. Thus, the success probability of the hybrid approach is bounded by:
\begin{align}
    \Pr\left[\mbox{hybrid attack}\right] &\leq  \Pr\left[\mathrm{Ver}_{i,l}(Doc,Sig)=\mathrm{True}\right]  \frac{ky + k\log_2(Nk)}{2^{b'_H-1}} \nonumber\\
    &=  \Pr\left[\sum_{h=1}^{N(1-d_R)} T^{Doc}_{i,h,l}>\frac{N}{2} + lNd_R \right]  \frac{ky + k\log_2(Nk)}{2^{b'_H-1}} \nonumber\\
    &\leq  \frac{ky + k\log_2(Nk)}{2^{b'_H-1}} \Xi\left(\left   \lfloor\frac{N}{2} + lNd_R \right\rfloor + 1,N(1-d_R),e^{-\frac{k}{2} (s_{l-1} -s_l)^2}\right) \nonumber\\
    &\leq \varepsilon_{\rm hyb} ,
\end{align}
with
\begin{align}
    \varepsilon_{\rm hyb} := \frac{ky + k\log_2(Nk)}{2^{b'_H-1}} \Xi\left(\left\lfloor\frac{N}{2}\right\rfloor +1 , N(1-d_R) ,\exp\left(- \frac{k}{8(l_{max}+1)^2}\right) \right). \label{attack-hybrid}
\end{align}
In the first equality we accounted for the fact that the dishonest receivers will provide $P_i$ with correct hash functions. In the second inequality, we considered the probability of passing a single test at $P_i$ as per derivation of \eqref{case3} and used the function \eqref{Xi-Andersson-Appendix} for the probability of at least $\frac{N}{2} + lNd_R$ successes out of $N(1-d_R)$ attempts. Finally, in the last inequality we chose the smallest value\footnote{We remark that, according to the transferability definition, the smallest allowed value for $l$ is $l=1$. Here, however, we choose $l=0$ so that the resulting bound is also reusable in the proof against repudiation.} for $l$ and adopted the optimal spacing $s_{l-1}-s_l=\Delta s=[2(l_{max}+1)]^{-1}$.

Since the coalition aims at maximizing its attack success probability, for each fixed pair $P_i,P_j$ the coalition will perform the type of attack with the highest probability of success, which is given by:
\begin{align}
    \max\{\varepsilon_{i,j},\varepsilon_{\rm auth},\varepsilon_{\rm hyb}\}.
\end{align}
Recall that the total success probability in \eqref{transf-toprove} refers to any pair of honest participants. There are in total $\binom{N(1-d_R)}{2}$ different pairs of honest participants. By employing the union bound, the relevant probability can be bounded by:
\begin{align}
    \Pr&\left[\exists\, P_i,P_j \notin C : \,\mathrm{Ver}_{i,l}(Doc,Sig)=\mathrm{True} \, \wedge   \,\mathrm{Ver}_{j,l-1}(Doc,Sig)=\mathrm{False}\right] \leq \binom{N(1-d_R)}{2} \max\{\varepsilon_{i,j},\varepsilon_{\rm auth},\varepsilon_{\rm hyb}\} \label{transf-proof-result}
\end{align}
where the epsilon variables are given in \eqref{attackPiPj-noauth-prob}, \eqref{attackPj-auth-prob} and \eqref{attack-hybrid}, respectively. This concludes the proof.
\end{proof}

\begin{lemma*} 
    Protocol~\ref{QDSprot-Andersson} is $\varepsilon_{\rm rep}$-secure against repudiation according to Definition~\ref{def:repudiation-Andersson}, with:
    \begin{align}
        \varepsilon_{\rm rep}= \binom{N(1-d_R)}{2}\max\{\varepsilon_{i,j},\varepsilon_{\rm auth},\varepsilon_{\rm hyb}\},
    \end{align}
    where $\varepsilon_{i,j},\varepsilon_{\rm auth}$ and $\varepsilon_{\rm hyb}$ are given in \eqref{attackPiPj-noauth-prob}, \eqref{attackPj-auth-prob} and \eqref{attack-hybrid}, respectively.
\end{lemma*}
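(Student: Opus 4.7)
The plan is to reduce the repudiation bound to the non-transferability bound already derived, by showing that any successful repudiation attack automatically exhibits an honest pair on which non-transferability fails at the consecutive levels $(l,l')=(0,-1)$.

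Unpacking the majority-vote definition \eqref{MV}, the outcome $\mathrm{MV}(Doc,Sig)=\mathrm{Invalid}$ means that at most $\lfloor N/2 \rfloor$ receivers have $\mathrm{Ver}_{\cdot,-1}(Doc,Sig)=\mathrm{True}$, hence at least $\lceil N/2 \rceil$ of them output $\mathrm{False}$ at level $-1$. Suppose a repudiation attack succeeds: some honest $P_i \notin C$ has $\mathrm{Ver}_{i,l}(Doc,Sig)=\mathrm{True}$ for some $l \geq 0$. By the cascade property \eqref{verfication-cascade}, also $\mathrm{Ver}_{i,0}(Doc,Sig)=\mathrm{True}$ and $\mathrm{Ver}_{i,-1}(Doc,Sig)=\mathrm{True}$, so $P_i$ is \emph{not} among the failing receivers. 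Dishonest receivers can freely claim to reject, accounting for at most $Nd_R=\omega-1$ of the failing receivers. The remaining failing receivers are honest, and their number is at least $\lceil N/2\rceil - (\omega-1) \geq 1$ because $\omega < (N+1)/2$ forces $\omega \leq \lceil N/2 \rceil$. Hence there exists an honest $P_j \neq P_i$ with $\mathrm{Ver}_{j,-1}(Doc,Sig)=\mathrm{False}$.

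This is precisely the non-transferability event for the pair of honest users $(P_i,P_j)$ at the consecutive levels $(l,l')=(0,-1)$. Inspecting the proof of the previous lemma, the three bounds $\varepsilon_{i,j}$, $\varepsilon_{\rm auth}$, $\varepsilon_{\rm hyb}$ are derived from the common spacing $\Delta s$ and the protocol constraint $(l_{max}+1)\Delta s < 1/2$, and remain valid for this choice of levels. In particular, $\varepsilon_{\rm auth}$ already uses the worst-case (smallest) exponent $N[1/2-(l_{max}+1)d_R]$, which upper-bounds the exponent $N[1/2-d_R]$ that would arise for $l=0$, while $\varepsilon_{\rm hyb}$ was already evaluated at $l=0$ per the footnote in the non-transferability proof. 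Applying the union bound over the $\binom{N(1-d_R)}{2}$ honest pairs then yields the claimed $\varepsilon_{\rm rep} = \binom{N(1-d_R)}{2}\max\{\varepsilon_{i,j},\varepsilon_{\rm auth},\varepsilon_{\rm hyb}\}$.

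The main obstacle is the counting step that guarantees the existence of an honest $P_j$ failing at level $-1$ whenever MV rejects; once this is in place, the remainder of the argument is a direct invocation of the non-transferability proof specialized to the level pair $(0,-1)$, with no new probabilistic estimates required.
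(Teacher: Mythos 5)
Your proposal is correct and follows essentially the same route as the paper's proof: it reduces repudiation to the non-transferability event at the level pair $(0,-1)$ by the same counting argument (at least $\lceil N/2\rceil - (\omega-1) \geq 1$ honest receivers must reject at level $-1$ when MV outputs Invalid, using $\omega < (N+1)/2$), applies the cascade property \eqref{verfication-cascade} to place $P_i$ at level $0$, and then invokes the transferability bound specialized to $l=0$. Your explicit checks that $\varepsilon_{\rm auth}$ and $\varepsilon_{\rm hyb}$ remain valid for this level pair make the reduction slightly more careful than the paper's one-line justification, but the argument is the same.
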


\begin{proof}
The proof can be reduced to a special case of the proof of transferability (see proof of Lemma~\ref{lm:transferability&rep-Andersson}). According to the repudiation definition (Definition~\ref{def:repudiation-Andersson}), we need to bound the following probability:
\begin{align}
    p_{\rm rep} = \Pr&\left[\exists\, P_i\notin C : \,\mathrm{Ver}_{i,l}(Doc,Sig)=\mathrm{True} \,\wedge \,\mathrm{MV}(Doc,Sig) = \mathrm{Invalid}\right].
\end{align}

Here, the attack is successful if an honest receiver validates the signature while the majority of receivers (specifically, $\geq \ceil{N/2} $) deems it invalid at verification level $l=-1$. With a coalition of $N d_R = \omega -1$ dishonest receivers, we can already be certain that $\omega -1$ participants will deem the signature as invalid in the majority vote. This means that the attack is successful when at least $N_{HR} :=\ceil{N/2} +1-\omega$ honest receivers reject the signature at verification level $l=-1$. By taking into account that the number of dishonest participants is strictly bounded by: $\omega <(N+1)/2$, the number of honest receivers that need to reject the signature for the attack to be successful is at least $N_{HR}  > \ceil{N/2} - N/2 + 1/2$, which is equivalent to requesting:
\begin{align}
    N_{HR} \geq \left\lbrace \begin{array}{ll}
        2 & \mbox{for $N$ odd} \\
        1 & \mbox{for $N$ even}. 
    \end{array} \right.
\end{align} 
In other words, a precondition for the attack being successful for any $N$ is that at least one honest receiver rejects the signature at verification level $l=-1$. Then, the probability to be bounded reads:
\begin{align}
         p_{\rm rep} &\leq \left[\exists\, P_i,P_j\notin C : \,\mathrm{Ver}_{i,l}(Doc,Sig)=\mathrm{True} \,\wedge \,\mathrm{Ver}_{j,-1}(Doc,Sig)=\mathrm{False}\right].
\end{align}
Now, we use the fact pointed out in \eqref{verfication-cascade} to upper bound the last expression as follows:
\begin{align}
        p_{\rm rep} &\leq \left[\exists\, P_i,P_j\notin C : \,\mathrm{Ver}_{i,0}(Doc,Sig)=\mathrm{True} \,\wedge \,\mathrm{Ver}_{j,-1}(Doc,Sig)=\mathrm{False}\right] \nonumber\\
        &\leq \binom{N(1-d_R)}{2} \max\{\varepsilon_{i,j},\varepsilon_{\rm auth},\varepsilon_{\rm hyb}\},
\end{align}
where we used \eqref{transf-proof-result} in the last inequality, since it also valid for the special case $l=0$. This concludes the proof.
\end{proof}

\begin{lemma*} 
    Protocol~\ref{QDSprot-Andersson} is $\varepsilon_{\rm for}$-secure against forgery according to Definition~\ref{def:forgery-Andersson}, with:
    \begin{align}
        \varepsilon_{\rm for}= (N-\omega)\,\,\Xi(\floor{N/2},N-\omega,p_t),
    \end{align}
    where $p_t$ is defined as:
    \begin{align}
        p_t =\Xi(\floor{k(1-s_0)}+1,k,2^{1-b_H}),
    \end{align}
    and
    \begin{align}
        \Xi(k,n,p):= \sum_{j=k}^{n} \binom{n}{j} p^j (1-p)^{n-j}. \label{Xi-Andersson-Appendix2}
    \end{align}
    Moreover, the forgery attack based on the security loophole discussed in Ref.~\cite{QDS-Kiktenko}, where a dishonest receiver forges the pair $\{Doc',Sig'\}$ and invokes the dispute resolution method for the other parties to accept the pair, succeeds with probability:
    \begin{align}
        p_{\rm attack} = \Xi(\floor{N/2}+1 - \omega ,N-\omega,p_{-1}),
    \end{align}
    with:
    \begin{align}
        p_{-1} = \Xi(\floor{N/2}+1 - \omega ,N-\omega,p_t).
    \end{align} 
\end{lemma*}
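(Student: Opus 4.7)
The plan is to prove both claims by classifying the $N$ tests at any fixed honest receiver $P_i$ according to the party $j$ that originally supplied the hash-function block $F_{j \to i}$, and then to bound the number of tests that can pass at the forged pair $\{Doc',Sig'\}$. First I would observe that for $j \in C$ the coalition knows every function in $F_{j \to i}$ (either directly, since $P_j$ is dishonest, or because $P_j \in C$ forwarded them to the coalition), so they can simply compute the correct tags for $Doc'$ and force all $\omega$ such tests to pass. For $j \notin C$ (including $j = i$), the coalition has no information about $F_{j \to i}$ apart from the tags $\{t_r = f_r(Doc)\}_{r \in R_{j \to i}}$ of the legitimately signed document; by the conditional form of the $\varepsilon$-$\mathrm{ASU}_2$ property (Definition~\ref{def:almost-strongly-universal2}), each guessed tag $t'_r = f_r(Doc')$ is correct with probability at most $\varepsilon = 2^{1-b_H}$, independently across $r$, because Alice samples the hash functions i.i.d.\ with replacement.

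For the first claim I would then compute the per-test pass probability at level $l=0$: the test $T^{Doc'}_{i,j,0}$ passes iff at most $\lfloor s_0 k\rfloor$ of the $k$ tags mismatch, equivalently at least $\lfloor k(1-s_0)\rfloor + 1$ of the $k$ guesses are correct, which occurs with probability at most $p_t = \Xi(\lfloor k(1-s_0)\rfloor+1,k,2^{1-b_H})$. Using $N\delta_0 = N/2 + \omega - 1$, verification at level $0$ requires at least $\lfloor N/2\rfloor + \omega$ passing tests; subtracting the $\omega$ guaranteed passes from dishonest sources, at least $\lfloor N/2\rfloor$ of the $N-\omega$ honest-sourced tests must pass, which by the binomial tail is bounded by $\Xi(\lfloor N/2\rfloor, N-\omega, p_t)$. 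A union bound over the $N-\omega$ honest candidate verifiers gives the stated $\varepsilon_{\rm for}$.

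For the second claim (the Kiktenko-loophole attack) I would carry over exactly the same bookkeeping but at verification level $l=-1$, where $\delta_{-1} = 1/2$ and the verification threshold is $\lfloor N/2\rfloor + 1$. After the $\omega$ forced passes, an honest $P_i$ accepts at level $-1$ iff at least $\lfloor N/2\rfloor + 1 - \omega$ of its $N-\omega$ honest-sourced tests pass, each with per-test probability at most $p_t$; this is exactly the quantity $p_{-1}$ in the statement. The dispute-resolution rule \eqref{MV} returns \textsf{Valid} iff at least $\lfloor N/2\rfloor + 1$ receivers verify at level $-1$, and since the $\omega$ dishonest receivers will claim to do so, the attack succeeds iff at least $\lfloor N/2\rfloor + 1 - \omega$ of the $N-\omega$ honest receivers actually verify. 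Each honest verification event is bounded by $p_{-1}$, and they are mutually independent because the sets $\{R_{j\to i}\}_{i,j}$ were drawn independently at the distribution stage, so another binomial tail yields $p_{\rm attack} = \Xi(\lfloor N/2\rfloor + 1 - \omega, N-\omega, p_{-1})$.

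The main obstacle I anticipate is correctly isolating which tests the coalition can deterministically force versus which remain governed by the $\varepsilon$-$\mathrm{ASU}_2$ guarantee; this is where the honest-sender assumption (forgery scenario) is essential, since it guarantees that the functions $F_{j\to i}$ for honest $j$ were drawn i.i.d.\ and distributed through secret channels invisible to $C$. A secondary technicality is justifying the conditional bound $\Pr[f_r(Doc')=t'_r \mid f_r(Doc)=t_r]\le 2^{1-b_H}$, which follows directly from Definition~\ref{def:almost-strongly-universal2} after dividing by $\Pr[f_r(Doc)=t_r]=2^{-b_H}$; and ensuring that the independence used in the two nested $\Xi$ expressions is legitimate, which again reduces to the independence of the sampling of the $N^2 k$ hash functions and of the partitions $R_{j\to i}$.
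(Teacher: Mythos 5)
Your proposal is correct and follows essentially the same route as the paper's proof: forcing the $\omega$ coalition-sourced tests to pass, reducing the level-$0$ threshold $\lfloor N\delta_0\rfloor+1=\lfloor N/2\rfloor+\omega$ to $\lfloor N/2\rfloor$ honest-sourced tests each passing with probability $p_t=\Xi(\lfloor k(1-s_0)\rfloor+1,k,2^{1-b_H})$, applying a union bound over the $N-\omega$ honest verifiers, and repeating the bookkeeping at level $l=-1$ with threshold $\lfloor N/2\rfloor+1$ for the dispute-resolution attack. Your explicit justification of the conditional bound $\Pr[f_r(Doc')=t'_r\mid f_r(Doc)=t_r]\le 2^{1-b_H}$ is a slightly more careful version of a step the paper asserts directly from the $\varepsilon$-ASU$_2$ security parameter.
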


\begin{proof}
A successful forgery attack occurs when a coalition $C$ of $\omega$ dishonest receivers, which does not include the sender, is provided with a valid $\{Doc,Sig\}$ pair and finds a new pair $\{Doc',Sig'\}$ (with $Doc' \neq Doc$) such that at least one of the $N-\omega$ honest receivers accepts it at the lowest verification level, $l=0$.

We first consider the case where the coalition tries to deceive a fixed receiver, $P_i$. Then, we are interested in bounding the following probability:
\begin{align}
    \Pr\left[\mathrm{Ver}_{i,0}(Doc',Sig')=\mathrm{True}  \right] &= \Pr\left[\sum_{j=1}^N T^{Doc'}_{i,j,0} \geq \floor{N \delta_0} + 1  \right].
\end{align}
By using the fact that $\delta_0 =1/2 + d_R$ and that $d_R=(\omega-1)/N$, we can write:
\begin{align}
    \Pr\left[\mathrm{Ver}_{i,0}(Doc',Sig')=\mathrm{True}  \right] &= \Pr\left[\sum_{j=1}^N T^{Doc'}_{i,j,0} \geq \left\lfloor\frac{N}{2}\right\rfloor  +\omega  \right]. \label{forgery-toprove}
\end{align}
Now, for all the $P_j \in C$, the test $T^{Doc'}_{i,j,0}$ can be forced to pass. Indeed, the coalition knows the hash functions in $F_{j \to i}$ and thus can choose the tags contained in $Sig'$, corresponding to the positions $R_{j\to i}$, to be the hashed values of $Doc'$ when using the functions in $F_{j \to i}$. This will guarantee that $T^{Doc'}_{i,j,0}=1$ occurs with certainty. Then, we can update the probability in \eqref{forgery-toprove} as follows:
\begin{align}
    \Pr\left[\mathrm{Ver}_{i,0}(Doc',Sig')=\mathrm{True}  \right] &= \Pr\left[\sum_{h=1}^{N-\omega} T^{Doc'}_{i,h,0} \geq \left\lfloor\frac{N}{2}\right\rfloor \right], \label{forgery-toprove2}
\end{align}
meaning that the attack is successful if at least $\left\lfloor N/2 \right\rfloor$ additional tests originating from honest receivers $h \notin C$ are passed. Let $p_t$ be the probability that the test based on the hash functions sent by receiver $h$ is passed:
\begin{align}
    p_t:= \Pr\left[T^{Doc'}_{i,h,0} = 1\right] \label{pt}.
\end{align}
Then, assuming w.l.o.g. that the probability of passing each test is independent for different receivers $h$, we can express the probability in \eqref{forgery-toprove2} as follows:
\begin{align}
    \Pr\left[\mathrm{Ver}_{i,0}(Doc',Sig') = \mathrm{True}  \right] &=  \Xi(\floor{N/2},N-\omega,p_t), \label{forgery-toprove3}
\end{align}
where $\Xi(k,n,p)$, defined in \eqref{Xi-Andersson-Appendix2}, is the function returning the probability of at least $k$ successes out of $n$ trials with success probability $p$ for each trial.

We now consider the general case of the coalition trying to deceive at least one of the $N-\omega$ honest receivers. The relevant probability is thus:
\begin{align}
        \Pr\left[\exists\, P_i \notin C : \,\mathrm{Ver}_{i,0}(Doc',Sig')=\mathrm{True}  \right] \leq (N-\omega) \Pr\left[\mathrm{Ver}_{i,0}(Doc',Sig')=\mathrm{True}  \right] \label{forgery-unionbound},
    \end{align}
where we employed the union bound in the inequality. Then, by combining \eqref{forgery-unionbound} with \eqref{forgery-toprove3} we obtain:
\begin{align}
    \Pr\left[\exists\, P_i \notin C : \,\mathrm{Ver}_{i,0}(Doc',Sig')=\mathrm{True}  \right] \leq (N-\omega) \,\,\Xi(\floor{N/2},N-\omega,p_t),
\end{align}
which is the claim of the Lemma.

We now derive an explicit expression for $p_t$. Participant $P_i$ deems the test $T^{Doc'}_{i,h,0}$ as passed (at verification level $l=0$) if the number of mismatches between the tags in $Sig'$ and the hashes obtained by applying the functions in $F_{h \to i}$ to $Doc'$ is smaller than $k s_0$:
\begin{align}
    \Pr\left[T^{Doc'}_{i,h,0} = 1\right] = \Pr\left[\sum_{r \in R_{h \to i}} g(t_r',f_r(Doc')) < k s_0 \right].
\end{align}
In order for the coalition to append correct tags in $Sig'$, they need to know the hash functions in $F_{h \to i}$. However, this is not possible since $h$ is an honest receiver and the sender is not part of the coalition. Let $f_1, f_2, \dots, f_k$ be the $k$ hash functions in $F_{h \to i}$. For each hash function, the coalition knows a valid message-tag pair from the knowledge of $\{Doc,Sig\}$ and wishes to find another valid pair. In particular, the coalition knows that $Doc$ and $t_1$ are such that $f_1(Doc)=t_1$ and wants to find $t_1'$ such that $f_1(Doc')=t_1'$. Since the $\varepsilon$-ASU$_2$ family $\mathcal{F}_{\rm ASU}$ adopted by the protocol has security parameter $\varepsilon=2^{1-b_H}$ (see Appendix~\ref{app:efficient-hashing}), the probability that the coalition correctly guesses the tag $t_1'$ is given by $2^{1-b_H}$. Since correctly guessing the tag for each hash function in $F_{h \to i}$ are independent events, the probability of correctly guessing more than $k- k s_0$ tags out of $k$ attempts is:
\begin{align}
    p_t = \Pr\left[T^{Doc'}_{i,h,0} = 1\right] = \Xi(\floor{k(1-s_0)} +1 ,k,2^{1-b_H}),
\end{align}
which is the expression provided in the claim of the Lemma.

Finally, we compute the success probability ($p_{\rm attack}$) of a forgery attempt where the forger invokes the dispute resolution method by falsely claiming that they verified the pair $\{Doc',Sig'\}$ at level $l=0$ (this attack was first pointed out in Ref.~\cite{QDS-Kiktenko}). In this case, the forger aims at making other honest receivers accept the signature at level $l=-1$, i.e., at a lower level compared to the standard definition of security against forgery, such that the outcome of the dispute resolution is to accept the pair: $\mathrm{MV}(Doc',Sig') = \mathrm{Valid}$.

Let $P_i \notin C$ be an honest receiver. Then, the probability that $P_i$ accepts the forged signature is:
\begin{align}
    p_{-1} &:= \Pr\left[\mathrm{Ver}_{i,-1}(Doc',Sig')=\mathrm{True}  \right] \nonumber\\
    &= \Pr\left[\sum_{j=1}^N T^{Doc'}_{i,j,-1} \geq \floor{N \delta_{-1}} + 1  \right] \nonumber\\
    &= \Pr\left[\sum_{h=1}^{N-\omega} T^{Doc'}_{i,h,-1} \geq \left\lfloor \frac{N}{2} \right\rfloor + 1 - \omega  \right],
\end{align}
where we assumed that the dishonest receivers will automatically force the test to be passed ($ T^{Doc'}_{i,j,-1} =1$). By again using the fact that $p_t$ is the probability that a single test $ T^{Doc'}_{i,h,-1}$ is passed, we have the following probability that an honest receiver accepts the forged signature at level $l=-1$:
\begin{align}
    p_{-1} &=\Xi(\floor{N/2}+1 -\omega,N-\omega,p_t) .
\end{align}
Now consider that, to enforce $\mathrm{MV}(Doc',Sig') = \mathrm{Valid}$, the malicious coalition needs a total of $\floor{N/2} + 1$ receivers accepting the pair. Since $\omega$ of them will surely accept, only $\floor{N/2} + 1 - \omega$ honest receivers out of $N- \omega$ need to accept the pair. This occurs with probability:
\begin{align}
    p_{\rm attack} = \Xi(\floor{N/2}+1 - \omega ,N-\omega,p_{-1}),
\end{align}
as reported in the claim of the Lemma. This concludes the proof.
\end{proof}

\end{document}